\theoremstyle{plain}
\newtheorem{proposition}{Proposition}
\newtheorem{theorem}{Theorem}
\newtheorem{lemma}{Lemma}
\newtheorem{corollary}{Corollary}
\theoremstyle{definition}
\newtheorem{definition}{Definition}
\newcommand{\myqed}{}
\newcommand{\yes}{\textbf{yes}\xspace}
\newcommand{\no}{\textbf{no}\xspace}
\newcommand{\F}{\mathcal{F}}
\newcommand{\OR}{\textsc{or}\xspace}
\newcommand{\coverILP}{\textsc{Cover ILP}\xspace}
\newcommand{\packingILP}{\textsc{Packing ILP}\xspace}
\newcommand{\independentset}{\textsc{Independent Set}\xspace}
\newcommand{\hittingset}{\textsc{Hitting Set}\xspace}
\newcommand{\ILPF}{\textsc{Integer Linear Program Feasibility}\xspace}
\newcommand{\containment}{\textup{NP}~$\subseteq$~\textup{coNP/poly}\xspace}
\newcommand{\noncontainment}{\textup{NP}~$\nsubseteq$~\textup{coNP/poly}\xspace}
\renewcommand{\P}{\mathcal{P}}
\newcommand{\R}{\mathcal{R}}
\newcommand{\Oh}{\mathcal{O}}
\newcommand{\N}{\ensuremath{\mathbb{N}}\xspace}
\newcommand{\cref}[1]{(\ref{#1})\xspace}
\title{On Polynomial Kernels for Integer Linear Programs: Covering, Packing and Feasibility}
\author{Stefan Kratsch\thanks{Technical University Berlin, Germany, \texttt{stefan.kratsch@tu-berlin.de}}~\thanks{Supported by the DFG, research project PREMOD, KR 4286/1.}}
\date{}
\begin{document}

\maketitle

\begin{abstract}
We study the existence of polynomial kernels for the problem of deciding feasibility of integer linear programs (ILPs), and for finding good solutions for covering and packing ILPs. Our main results are as follows: First, we show that the \textsc{ILP Feasibility} problem admits no polynomial kernelization when parameterized by both the number of variables and the number of constraints, unless \containment. This extends to the restricted cases of bounded variable degree and bounded number of variables per constraint, and to covering and packing ILPs. Second, we give a polynomial kernelization for the \textsc{Cover ILP} problem, asking for a solution to $Ax\geq b$ with $c^Tx\leq k$, parameterized by $k$, when $A$ is row-sparse; this generalizes a known polynomial kernelization for the special case with $0/1$-variables and coefficients ($d$-\textsc{Hitting Set}).
\end{abstract}

\section{Introduction}\label{section:introduction}

This work seeks to extend the theoretical understanding of preprocessing and data reduction for Integer Linear Programs (ILPs). Our motivation lies in the fact that ILPs encompass many important problems, and that ILP solvers, especially CPLEX, are known for their preprocessing to simplify (and shrink) input instances before running the main solving routines (see Atamt\"urk and Savelsbergh~\cite{AtamturkS05} for a survey about modern ILP solvers). When it comes to NP-hard problems, then, formally, being able to reduce \emph{every} instance of some problem would give an efficient algorithm for solving it entirely, and prove~P~$=$~NP (cf.~\cite{HarnikN10}). We avoid this issue by studying the question for efficient preprocessing via the notion of \emph{kernelization} from parameterized complexity~\cite{DowneyF1998_parameterizedcomplexity}, which relates the performance of the data reduction to one or more problem-specific parameters, like the number~$n$ of variables of an ILP.

A kernelization with respect to some parameter~$n$ is an efficient algorithm that given an input instance returns an equivalent instance of size depending only on~$n$; a \emph{polynomial kernelization} guarantees size polynomial in the parameter (see Section~\ref{section:preliminaries} for formal definitions). This notion has been successfully applied to a wide range of problems (see Lokshtanov et al.~\cite{LokshtanovMS12} for a recent survey). A breakthrough result by Bodlaender et al.~\cite{BodlaenderDFH09} (using~\cite{FortnowS11}) gave a framework for ruling~out polynomial kernels for certain problems, assuming \noncontainment~(else the polynomial hierarchy collapses).\footnote{All kernelization lower bounds mentioned in this work are modulo this assumption.}

\textbf{ILP feasibility.} Let us first discuss the \ILPF~(ILPF) problem: Given a set of~$m$ linear (in)equalities in~$n$ variables with integer coefficients, decide whether some integer point~$x\in\mathbb{Z}^n$ fulfills all of them. A well-known result of Lenstra~\cite{Lenstra1983} gives an~$\Oh(\alpha^{n^3}m^c)$ time algorithm for this problem, later improved, e.g., by Kannan~\cite{Kannan87} to~$\Oh(n^{\Oh(n)}m^c)$. We can trivially ``reduce'' to size~$N=\Oh(n^{\Oh(n)})$ by observing that Kannan's algorithm solves all larger instances in polynomial time~$\Oh(Nm^c)=N^{\Oh(1)}$. Can actual reduction rules give smaller kernels, for example with size polynomial in~$n$?

It is clear that we can store an ILPF instance, for example,~$(A,b)$, with~$A\in\mathbb{Z}^{m\times n}$,~$b\in\mathbb{Z}^m$, asking for~$x\geq 0$ with~$Ax\leq b$, by encoding all~$\Oh(nm)$ coefficients, which takes~$\Oh(nm\log C)$ bits where~$C$ is the largest absolute value among coefficients. Let us check what can be said about polynomial kernels with respect to these parameters for ILPF and the~$r$-row-sparse\footnote{Row-sparseness~$r$: at most~$r$ variables per constraint; column-sparseness~$q$: each variable occurs in at most~$q$ constraints; we use~$r$ and~$q$ throughout this work.} variant~$r$-ILPF:

\emph{If the row-sparseness is unrestricted}, then~ILPF($n+C$) and~ILPF($m+C$) encompass \hittingset($n$) and \hittingset($m$)\footnote{\hittingset: Given a base set~$U$ of size~$n$, a set~$\F$ of~$m$~subsets of~$U$, and an integer $k$, find a set of~$k$ elements of~$U$ that intersects each set in~$\F$ (if possible). ILP formulation: Is there~$(x)_{u\in U}$ with~$\sum_{u\in U}x_u\leq k$, and~$\sum_{u\in F}x_u\geq 1$ for all $F\in \F$?}, respectively, which admit no polynomial kernels~\cite{DomLS09,HermelinKSWW11_arxiv}. What about~ILPF($n+m$), which is the maximal open case below the trivial parameter~$n+m+\log C$ (see Figure~\ref{figure:general:ilpf:overview})? 

\emph{With bounded row-sparseness~$r$}, things turn out differently: For~$r$-ILPF($n+C$) and~$r$-ILPF($m+\log C$) there are polynomial kernels: The former is not hard and we briefly explain it in Appendix~\ref{appendix:section:rilpfnC:pk}; the latter is trivial since row-sparseness~$r$ entails~$n\leq r\cdot m$ and hence the above encoding uses~$\Oh(nm\log C)=\Oh(rm^2\log C)=(m+\log C)^{\Oh(1)}$ bits. It was showed previously that~$r$-ILPF($n$) admits no polynomial kernel~\cite{Kratsch13_sparseilp}, and it can be seen that the proof works also for~$r$-ILPF($n+\log C$).\footnote{The used cross-composition with~$t$ input instances creates coefficients of value~$\Oh(t^2)$ with encoding size~$\log C=\Oh(\log t)$ which is permissible for a cross-composition.} 
Again, this leaves parameter~$n+m$ open (see Figure~\ref{figure:rowsparse:ilpf:overview}).

Our contribution for ILPF is the following theorem which, unfortunately, settles both~ILPF($n+m$) and~$r$-ILPF($n+m$) negatively (see Section~\ref{section:ilpfnm:lowerbound}). It can be seen that this completes the picture regarding the existence of polynomial kernels for ILPF and~$r$-ILPF for parameterization by any subset of~$n$,~$m$,~$\log C$, and~$C$. The same is true for the column-sparse case~$q$-ILPF (see Figure~\ref{figure:columnsparse:ilpf:overview}), but we omit a detailed discussion since it is quite similar to the row-sparse case.

\begin{theorem}\label{theorem:main:ilpfnm:nopk}
ILPF($n+m$) does not admit a polynomial kernelization or compression, unless \containment. This holds also if each constraint has at most three variables and each variable is in at most three constraints.
\end{theorem}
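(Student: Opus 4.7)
The plan is to apply the cross-composition framework of Bodlaender, Jansen, and Kratsch. It suffices to prove the claim for the restricted variant in which each constraint has at most three variables and each variable appears in at most three constraints, since ILPF($n+m$) in full generality is an immediate super-problem of this case. I would start from a standard NP-hard source problem such as $3$-\textsc{SAT}, with a polynomial equivalence relation that groups input instances by their number of variables and clauses so that, within each equivalence class, all instances have the same syntactic signature. Given $t$ such source instances $I_1, \ldots, I_t$ of common size $s$, the goal is to construct, in polynomial time, one $3$-sparse ILPF instance that is feasible iff some $I_i$ is a yes-instance and whose output parameter $n + m$ is bounded by $\mathrm{poly}(s + \log t)$. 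The produced instance may be as large as $\mathrm{poly}(st)$, and large coefficients are freely allowed; the content is in compressing the parameter.

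My construction would be of the binary-selector type. I would introduce $d = \lceil \log t \rceil$ selector variables $z_1, \ldots, z_d$, forced into $\{0,1\}$ by inequalities, together with a shared pool of at most $s$ candidate assignment variables $x_1, \ldots, x_s$. The bit pattern of $(z_1, \ldots, z_d)$ designates an instance index $i$, and the $x_j$'s play the role of a candidate satisfying assignment for $I_i$. For each constraint slot $j$ of the source problem I would encode ``the $j$-th constraint of instance $i$ holds'' using a gadget that realises the selection arithmetically, bit-by-bit on the $z_j$'s, so that only $O(\log t)$ auxiliary constraints and variables are needed per slot rather than $\Omega(t)$. Where possible, per-instance data (literal identities and signs) are packed into the \emph{coefficients} of a small number of ILP constraints, exploiting the freedom in coefficient magnitude. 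This should yield $n + m = \mathrm{poly}(s + \log t)$ overall.

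The main obstacle will be reconciling the sparsity constraints with this small parameter budget. With at most three variables per constraint and three constraints per variable, one cannot simply write a long sum or let a single variable appear in many places; every selection or propagation step has to be broken up into constraints of the form $y = y' + y''$ (three variables) and fan-out ``copy gadgets'' of the same shape (preserving column-sparsity three). Verifying that this decomposition adds only a $\mathrm{poly}(s + \log t)$ overhead to $n + m$, rather than a $\mathrm{poly}(t)$ factor, is the technical heart of the construction and the step where I expect most of the bookkeeping to live. Once this is in place, correctness is by design: a feasible $(x, z)$ yields the index $i$ encoded by the $z_j$'s together with an assignment $x$ satisfying all constraints of $I_i$, while any yes-instance $I_i$ is witnessed by setting $z$ to the binary encoding of $i$ and $x$ to a satisfying assignment of $I_i$. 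The parameter bound $n + m = \mathrm{poly}(s + \log t)$ then meets the cross-composition requirement, and the theorem follows: any polynomial kernelization or compression of ILPF($n+m$) would imply \containment.
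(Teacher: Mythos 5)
Your overall architecture matches the paper's: an \OR-cross-composition into ILPF($n+m$) that uses a logarithmic number of binary selector variables, packs the data of all $t$ input instances into the values of a few large coefficients, and then sparsifies to three variables per constraint and three occurrences per variable via partial-sum and copy chains (that last step is exactly the paper's Corollary~\ref{corollary:ilpfnm:sparse:nopk}). However, there is a genuine gap precisely at the step you yourself flag as the heart of the construction: you assert that ``the $j$-th constraint of instance $i$ holds'' can be realised ``arithmetically, bit-by-bit on the $z_j$'s'' with only $\Oh(\log t)$ auxiliary variables and constraints per slot, but you never say how. This is not routine: a linear constraint cannot multiply two variables, so there is no direct way to select the $i$-th chunk of a packed coefficient as a function of the selector bits, and the naive case distinction over the $t$ possible selector values costs $\Omega(t)$ constraints, which destroys the parameter bound. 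The paper's solution is a dedicated gadget (Lemma~\ref{lemma:power}) that enforces $a=b\cdot 2^p$ for a \emph{variable} exponent $p$ by chaining conditional doublings $a_j = 2^{2^j p_j}a_{j-1}$ through big-$M$ constraints, one pair per bit $p_j$ of $p$; on top of that sits the forced extraction of the $p$-th bit of a packed constant $D$ via the decomposition $D=\alpha+\beta+\gamma$ with $0\le\alpha\le 2^p-1$, $\beta=2^p e$ for $e\in\{0,1\}$, and $\gamma$ a multiple of $2^{p+1}$ (the last condition again requiring the power gadget). Without some such mechanism your soundness direction also fails: it is not enough that the intended assignment is feasible; the constraints must \emph{force} every feasible point to decode the data of the selected instance correctly.

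A secondary issue is your choice of $3$-\textsc{SAT} as the source problem. Per clause slot the instance-dependent data consists of the identities and signs of three literals, i.e.\ $\Theta(\log s)$ bits, each of which would need its own extraction gadget. The paper instead composes from \independentset on a fixed vertex set, so that the only instance-dependent datum per ``slot'' (vertex pair) is a single presence bit $e_{i,j}$, which then plugs directly into the constraint $x_i+x_j+e_{i,j}\le 2$. Your route is probably salvageable but would multiply the bookkeeping.
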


It appears that ILPF($n+m$) is the first problem for which we know that a polynomial kernelization fails solely due to the encoding size of large numbers in the input data (and taking into account our proof that no reduction is possible); an additional parameter~$\log C$ would trivially give a polynomial kernel. This of course fits into the picture of hardness results for weight(ed) problems, e.g., W[1]-hardness of \textsc{Small Subset Sum}($k$) where the task is to pick a subset of at most~$k$ numbers to match some target sum~\cite{DowneyF95_completeness} and the kernelization lower bound for \textsc{Small Subset Sum}($k+d$) where the value of numbers is upper bound by~$2^d$~\cite{DomLS09}; however, in both cases the \emph{number} of weights is not bounded in the parameters. Furthermore, there are lower bounds for weighted graph problems (e.g.,~\cite{JansenB11}), but there the used weights have \emph{value} polynomial in the instance size and hence negligible \emph{encoding size}.
We also point out two contrasting positive results: A randomized polynomial compression of \textsc{Subset Sum}($n$)~\cite{HarnikN10}, and a randomized reduction of \textsc{Knapsack}($n$) to many instances of size polynomial in~$n$~\cite{NederlofLZ12} (the number of instances depends on the bit size of the largest weight).

\textbf{Covering and packing ILPs.} Given the overwhelming amount of negative results for ILPF, we turn to the more restricted cases of covering and packing ILPs (cf.~\cite{PlotkinST1991}) with the hope of identifying some positive cases:
\begin{align*}
\mbox{(covering ILP:) }\min\quad & c^Tx & \mbox{\hspace{1cm}(packing ILP:) }\max\quad & c^Tx \mbox{\hspace{2cm}}\\
\mbox{s.t.}\quad & Ax\geq b & \mbox{s.t.}\quad &Ax\leq b\\
&x\geq 0 && x\geq 0
\end{align*}
Here~$A$,~$b$, and~$c$ have non-negative integer entries (coefficients). Feasibility for these ILPs is usually trivial (e.g.,~$x=0$ is feasible for packing), and the more interesting question is whether there exist feasible solutions~$x$ with small (resp.\ large) value of~$c^Tx$. Encompassing many well-studied problems from parameterized complexity, we ask whether~$c^Tx\leq k$ (resp.~$c^Tx\geq k$), and parameterize by~$k$; instances are given as~$(A,b,c,k)$. Unsurprisingly, there are a couple of special cases contained in this setting that have been studied before (e.g., with~$0/1$-variables and coefficients); some of those are W[1]-hard (and are hence unlikely to have polynomial kernels), whereas others have positive results that we could hope to generalize to the more general ILP setting.
To capture the different cases, we use the column-sparseness~$q$ and row-sparseness~$r$ of the matrix~$A$: taking~$q$ and~$r$ as constants, additional parameters, or unrestricted values defines different problems. Our main result in this part is a polynomial kernelization for~$r$-\coverILP($k$) (see Section~\ref{section:rcoverILPk:pk}); the special case of only~$0/1$ variables and coefficients is known as~$r$-\hittingset($k$) and admits kernels of size~$\Oh(k^r)$~\cite{FlumG2006_parameterizedcomplexitytheory,Abu-Khzam10}. Our result also uses the Sunflower Lemma (like~\cite{FlumG2006_parameterizedcomplexitytheory}), but the reduction arguments for sunflowers of linear constraints are of course more involved than for sets.

\begin{theorem}\label{theorem:main:rcoverilpk:pk}
The~$r$-\coverILP($k$) problem admits a reduction to~$\Oh(k^{r^2+r})$ variables and constraints, and a polynomial kernelization.
\end{theorem}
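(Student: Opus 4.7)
My plan is to iterate a generalisation of the sunflower rule used for $d$-\hittingset, after light preprocessing. First, I would assume $c_j\ge 1$ for every variable: a variable with $c_j=0$ contributes nothing to the objective, so it can be raised to saturate any constraint in which it appears and then deleted together with those constraints (setting $x_j=\lceil b_i/a_{ij}\rceil$ in the deleted constraints recovers an original solution). With $c\ge\mathbf{1}$, the budget $c^Tx\le k$ forces $x_j\le k$ for every~$j$ and $\sum_j x_j\le k$. I would also cap each coefficient at its right-hand side (replacing $a_{ij}>b_i$ by $b_i$ is safe because any integer $x_j\ge 1$ already saturates that term), and report infeasibility whenever $b_i>k\cdot\max_{j\in S_i}a_{ij}$.

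\textbf{Generalised sunflower rule.} The central rule is: if $k+1$ constraints $C_1,\ldots,C_{k+1}$ have supports forming a sunflower with core $Y$ and pairwise disjoint non-empty petals $P_1,\ldots,P_{k+1}$, and they agree both on the coefficients $(\alpha_j)_{j\in Y}$ on the core and on the right-hand side~$b$, then they may all be replaced by the single core-only constraint $\sum_{j\in Y}\alpha_j x_j\ge b$. That core-only constraint implies each $C_i$ because petal contributions are non-negative; conversely, in any feasible $x$ with $\sum_j x_j\le k$, if $\sum_{j\in Y}\alpha_j x_j<b$ then the integer residual $b-\sum_{j\in Y}\alpha_j x_j\ge 1$ must be supplied inside every petal, forcing some integer $x_j\ge 1$ with $j\in P_i$ in each of the $k+1$ disjoint petals and contradicting $\sum_j x_j\le k$.

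\textbf{Counting and kernelization.} To apply the rule exhaustively I would group constraints into signatures consisting of the support size $s\le r$, the right-hand side~$b$, and the multiset of coefficients on the support, refined by how that multiset is split between core and petals once a sunflower is fixed inside the signature. Within a signature the Sunflower Lemma extracts a $(k+1)$-petal sunflower as soon as the class contains more than $r!\cdot k^r$ constraints; a further pigeonhole over the at most $\binom{s}{|Y|}\le 2^r$ possible core/petal splits refines to $k+1$ sunflower constraints uniform on both the core-coefficient vector and the right-hand side, so that the rule above applies. Normalising coefficients so that values inducing identical behaviour on $\{0,\ldots,k\}^r$ are identified keeps the number of signatures polynomial in~$k$; iterating the reduction leaves $\Oh(k^{r^2+r})$ constraints, and since each touches at most~$r$ variables the same bound carries over to the variables once unused ones are removed. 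The polynomial kernelization then follows by applying the theorem of Frank and Tardos to each surviving constraint to shrink its coefficient bit-sizes.

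\textbf{Main obstacle.} The delicate part is the signature design: the $k+1$ constraints delivered by the lemma must match on the entire core-coefficient \emph{vector} and on the right-hand side, not just on the support shape, and the number of signatures must remain polynomial in~$k$ even though the input coefficients are a priori arbitrary non-negative integers. Controlling that signature count via coefficient normalisation is the main subtlety; the exponent $r^2+r$ (rather than a naive $2r$) arises from combining the $\Oh(k^r)$ constraints allowed per signature with the polynomial but $r$-dependent number of signatures forced by the normalisation and the core/petal refinement.
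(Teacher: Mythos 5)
Your replacement rule itself is sound: if $k+1$ constraints whose supports form a sunflower also agree on the core coefficient vector and the right-hand side, and every budget-$k$ solution satisfies $\sum_j x_j\le k$, then swapping them for the core-only constraint is an equivalence (monotonicity one way, the disjoint-petals counting the other way). The gap is exactly where you flag it, and it is load-bearing rather than a side issue: you never establish that the number of signatures is polynomial in $k$. The coefficients and right-hand sides are arbitrary non-negative integers, so the raw number of (coefficient multiset, RHS) pairs is exponential in the input size, and ``identify constraints with identical behaviour on $\{0,\ldots,k\}^r$'' does not immediately save you, since the number of subsets of $\{0,\ldots,k\}^d$ is $2^{(k+1)^d}$. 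What is actually needed is the further fact that only $k^{\Oh(r^2)}$ of these subsets are realisable by a single halfspace (a hyperplane-arrangement / threshold-function counting argument), together with a canonical representative per class so that the $k+1$ constraints produced by your pigeonhole genuinely share one concrete core constraint to write down. There is also a smaller slip: the Sunflower Lemma needs distinct \emph{sets}, so ``more than $r!\,k^r$ constraints in a class'' yields nothing unless you first bound the number of constraints sharing a single support (two constraints with the same support, the same coefficient multiset and the same RHS but permuted coefficients are distinct and non-redundant). As written, the preprocessing to $\Oh(k^{r^2+r})$ constraints is therefore not complete.

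It is worth contrasting this with how the paper sidesteps the whole difficulty. It first caps the number of constraints per scope at $(k+1)^d$ by keeping one constraint per infeasible assignment in $\{0,\ldots,k\}^d$, then applies the Sunflower Lemma only to the set of distinct \emph{scopes}, ignoring coefficients entirely. Crucially, it never requires the sunflower constraints to agree on anything beyond their supports: for each of the $(k+1)^s$ core assignments it marks up to $k+1$ constraints not already satisfied by that core assignment alone and deletes every unmarked constraint; safety follows because a deleted constraint violated by some solution would exhibit a core assignment under which $k+1$ surviving constraints each still demand a nonzero petal variable, exceeding the budget. That marking argument is precisely what makes coefficient non-uniformity harmless, and it is the step your coefficient-uniform rule forces you to replace with the unproven signature count. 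Your Frank--Tardos step for the final kernel is a legitimate alternative to the paper's table-encoding followed by a Karp reduction back to $r$-\coverILP($k$), but it only becomes relevant once the constraint count is actually bounded.
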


Furthermore, we show how to preprocess instances of~\packingILP($k+q+r)$ and~\coverILP($k+q+r$) to equivalent instances with polynomial in~$kqr$ many variables and constraints (see Section~\ref{section:furtherresults}). For~$r$-\packingILP($k+q$) this is extended to a polynomial kernelization, generalizing that for the special case of bounded degree \independentset($k$). To put these results into context, we provide an overview containing also the inherited hard cases in Tables~\ref{table:packing} and~\ref{table:covering}; for completeness, a brief discussion of these cases can be found in Appendix~\ref{appendix:section:intractable}.

\begin{table}[t]
\centering
\begin{tabular}{|cc||m{3.6cm}|m{3.6cm}|m{3.6cm}|}
\hline
&\multicolumn{4}{c|}{\textbf{Parameterized complexity of Packing ILP($\boldsymbol{k}$)}}\\
\hline
&&\multicolumn{3}{c|}{row-sparseness~$r$}\\
& & \centering constant & \centering parameter & \multicolumn{1}{c|}{unrestricted}\\
\hline
\hline
\multirow{3}{*}{\begin{sideways}column-sparseness~$q$\end{sideways}}& constant&(PK)&(FPT)&\textbf{W[1]-hard from Subset Sum($\boldsymbol{k}$)}\\
\cline{2-5}
& parameter & \textbf{PK (Theorem~\ref{theorem:packingilpkqr:preprocessing})} &\textbf{FPT; $\boldsymbol{n=\Oh(kqr)}$ and $\boldsymbol{m=\Oh(kq^2r)}$ (Theorem~\ref{theorem:packingilpkqr:preprocessing})}&(W[1]-hard)\\
\cline{2-5}
& unrestricted & \textbf{W[1]-hard from Independent Set($\boldsymbol{k}$)}&(W[1]-hard)&(W[1]-hard)\\
\hline
\end{tabular}
\caption{\label{table:packing} ``PK'' stands for polynomial kernel, ``no PK'' stands for no polynomial kernel unless \containment. All normal-font entries are implied by boldface entries.}
\end{table}

\begin{table}[t]
\centering
\begin{tabular}{|cc||m{3.6cm}|m{3.6cm}|m{3.6cm}|}
\hline
&\multicolumn{4}{c|}{\textbf{Parameterized complexity of Cover ILP($\boldsymbol{k}$)}}\\
\hline
&&\multicolumn{3}{c|}{row-sparseness~$r$}\\
& & \centering constant & \centering parameter & \multicolumn{1}{c|}{unrestricted}\\
\hline
\hline
\multirow{3}{*}{\begin{sideways}column-sparsen.~$q$\end{sideways}} & constant &(PK)&(FPT)&\textbf{W[1]-hard from Subset Sum($\boldsymbol{k}$)}\\
\cline{2-5}
& parameter& (PK) & (FPT); \textbf{$\boldsymbol{n=\Oh(kqr)}$ and $\boldsymbol{m=\Oh(kq)}$ (Theorem~\ref{theorem:coveringilpkqr:preprocessing})}&(W[1]-hard)\\
\cline{2-5}
& unrestricted& 
\textbf{PK (Theorem~\ref{theorem:main:rcoverilpk:pk})}&\textbf{FPT but no PK (Proposition~\ref{proposition:covering:hittingsetncase})}&\textbf{W[2]-hard from Hitting Set($\boldsymbol{k}$)}\\
\hline
\end{tabular}
\caption{\label{table:covering} ``PK'' stands for polynomial kernel, ``no PK'' stands for no polynomial kernel unless \containment. All normal-font entries are implied by boldface entries.}
\end{table}


\section{Preliminaries}\label{section:preliminaries}

A \emph{parameterized problem} over some finite alphabet~$\Sigma$ is a language~$\P\subseteq\Sigma^*\times\N$.
The problem~$\P$ is \emph{fixed-parameter tractable} if~$(x,k)\in\P$ can be decided in time~$f(k)\cdot(|x|+k)^{\Oh(1)}$, where~$f$ is an arbitrary computable function.
A \emph{kernelization} for~$\P$ is a polynomial-time algorithm that, given input~$(x,k)$, computes an equivalent instance~$(x',k')$ with~$|x'|,k'\leq h(k)$ where~$h$ is some computable function;~$K$ is a \emph{polynomial} kernelization if~$h(k)$ is polynomially bounded in~$k$. By relaxing the restriction that the created instance~$(x',k')$ must be of the same problem and allow the output to be an instance of any language (i.e., any decision problem) we get the notion of \emph{(polynomial) compression}. Almost all lower bounds for kernelization apply also for this weaker notion.

We also use the concept of an (\OR-)cross-composition of Bodlaender et al.~\cite{BodlaenderJK11} which builds on the breakthrough results of Bodlaender et al.~\cite{BodlaenderDFH09} and Fortnow and Santhanam~\cite{FortnowS11} for proving lower bounds for kernelization.

\begin{definition}[\cite{BodlaenderJK11}] \label{definition:polynomialequivalencerelation}
An equivalence relation~$\R$ on~$\Sigma^*$ is called a \emph{polynomial equivalence relation} if the following two conditions hold:
\begin{enumerate}
\item There is a polynomial-time algorithm that decides whether two strings belong to the same equivalence class (time polynomial in~$|x|+|y|$ for~$x,y\in\Sigma^*)$.
	\item For any finite set~$S \subseteq \Sigma^*$ the equivalence relation~$\R$ partitions the elements of~$S$ into a number of classes that is polynomially bounded in the size of the largest element of~$S$.
\end{enumerate}
\end{definition}

\begin{definition}[\cite{BodlaenderJK11}]\label{definition:crosscomposition}
Let~$L\subseteq\Sigma^*$ be a language, let~$\R$ be a polynomial equivalence relation on~$\Sigma^*$, and let~$\P\subseteq\Sigma^*\times\N$ be a parameterized problem. An \emph{\OR-cross-com\-position of~$L$ into~$\P$} (with respect to $\R$) is an algorithm that, given~$t$ instances~$x_1, x_2, \ldots, x_t \in \Sigma^*$ of~$L$ that are~$\R$-equivalent, takes time polynomial in~$\sum _{i=1}^t |x_i|$ and outputs an instance~$(y,k) \in \Sigma^* \times \mathbb{N}$ such that:
\begin{enumerate}
\item The parameter value~$k$ is polynomially bounded in~$\max_i|x_i|+\log t$.
\item The instance~$(y,k)$ is \yes if and only if \emph{at least one} instance~$x_i$ is \yes.
\end{enumerate}
We then say that~$L$ \OR-cross-composes into~$\P$.
\end{definition}

\begin{theorem}[\cite{BodlaenderJK11}]\label{theorem:orcc}
If an NP-hard language~$L$ \OR-cross-composes into the parameterized problem~$\P$, then~$\P$ does not admit a polynomial kernelization or compression unless \containment and the polynomial hierarchy collapses.
\end{theorem}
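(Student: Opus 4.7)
The plan is to apply Theorem~\ref{theorem:orcc} by constructing an OR-cross-composition from an NP-hard source language (e.g.\ \textsc{3-SAT}) into the $3$-sparse restriction of ILPF$(n+m)$. Let the polynomial equivalence relation $\R$ group source instances by their structural parameters (number of variables $n_0$ and clauses $m_0$, plus well-formedness); then the $t$ given instances $\phi_1,\ldots,\phi_t$ all share the same shape. We build one ILPF instance $(A,b)$ that is feasible iff some $\phi_i$ is satisfiable, with $n+m$ polynomial in $n_0+m_0+\log t$ and with row-sparseness and column-sparseness both at most~$3$. The variables fall into three groups: $n_0$ binary assignment variables $x_1,\ldots,x_{n_0}$ shared across all $\phi_i$; $\Oh(\log t)$ binary selector variables $s_1,\ldots,s_{\lceil\log t\rceil}$ encoding an index $s\in\{1,\ldots,t\}$ in binary; and polynomially many auxiliary variables. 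The constraints will enforce ``the clauses of $\phi_s$ are satisfied by $x$''. The key observation is that coefficient encoding size counts toward $|y|$ but not toward the parameter $n+m$, so \emph{large integer coefficients are free}.

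The technical core is to pack the clause data of all $t$ instances into a few coefficients via positional notation. For each clause-slot $j\in\{1,\ldots,m_0\}$, pre-compute a large integer encoding which three literals appear in the $j$-th clause of $\phi_i$ across all $i$, written in base $B$ with $B$ large enough to avoid carries. The ILP ``looks up'' the relevant digits using the selector $s$: one builds an auxiliary quantity equal to $B^s$ by an $\Oh(\log t)$-step conditional-doubling gadget (each step a conditional multiplication by $B^{2^{\ell-1}}$, linearized via the standard big-$M$ trick for products of a binary and a bounded integer), and then extracts digits via equations of the form $\mathrm{Packed}=B^s\cdot\mathrm{high}+\mathrm{digit}\cdot B^{s-1}+\mathrm{low}$ with range bounds $0\le\mathrm{low}<B^{s-1}$ and $0\le\mathrm{digit}<B$. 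The extracted digits are then combined with the $x$-variables to form the standard ILP encoding of 3-SAT clauses $x_a+x_b+x_c\ge 1$ (with signs via $1-x$). A final sparsification pass---replacing each wide constraint $\sum_\ell c_\ell z_\ell\le d$ by a chain of three-variable constraints on partial sums $p_{\ell+1}=p_\ell+c_{\ell+1}z_{\ell+1}$, and splitting each high-occurrence variable into equated copies linked by two-variable equalities---produces an ILPF instance with $r\le 3$, $q\le 3$ and $n+m=\mathrm{poly}(n_0+m_0+\log t)$, as required by Definition~\ref{definition:crosscomposition}. NP-hardness of \textsc{3-SAT} combined with Theorem~\ref{theorem:orcc} then yields Theorem~\ref{theorem:main:ilpfnm:nopk}.

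The main obstacle is the compact multiplexing step. Any approach using an explicit indicator $[s=i]$ per $i\in\{1,\ldots,t\}$, or a separate constraint per $(i,j)$-pair, blows up the parameter $n+m$ to $\Omega(t)$. The challenge is to design digit-extraction and product gadgets so that each access contributes only a constant number of new variables and constraints \emph{per clause-slot}, yielding the $\mathrm{poly}(\log t)$ dependence. A second delicate point is soundness: because the large coefficients give the constructed system significant arithmetic flexibility, the range bounds on the auxiliary ``digit'', ``high'', and ``low'' witnesses must be tight enough that no feasible integer point can satisfy the arithmetic without $(x,s)$ corresponding to a genuine satisfying assignment of~$\phi_s$. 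One alternative route is to cross-compose from a numerical NP-hard problem such as a bounded variant of \textsc{Subset Sum}, whose instances combine more naturally under positional packing; the selection-and-extraction machinery is analogous in either case.
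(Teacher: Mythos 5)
The statement you were asked to prove is Theorem~\ref{theorem:orcc} itself, i.e., the generic lower-bound framework of Bodlaender, Jansen and Kratsch~\cite{BodlaenderJK11}: \emph{if} an NP-hard language \OR-cross-composes into~$\P$, \emph{then}~$\P$ has no polynomial kernelization or compression unless \containment. Your proposal does not address this statement at all. Instead, it sketches a construction of a particular \OR-cross-composition (from \textsc{3-SAT} into sparse \ILPF($n+m$)) and then \emph{invokes} Theorem~\ref{theorem:orcc} as a black box to conclude Theorem~\ref{theorem:main:ilpfnm:nopk}. In other words, you have written (an outline of) the proof of Lemma~\ref{lemma:ilpfnm:nopk} and Corollary~\ref{corollary:ilpfnm:sparse:nopk}, not of the framework theorem; as a proof of Theorem~\ref{theorem:orcc} it is circular, since it assumes the very implication it is supposed to establish. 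Note that the paper itself gives no proof of this theorem either --- it is imported from~\cite{BodlaenderJK11}, and a genuine proof must go through the distillation machinery: assuming~$\P$ admits a polynomial compression, one uses the polynomial equivalence relation to split the~$t$ input instances of~$L$ into polynomially many classes, applies the cross-composition to each class, compresses each resulting instance to size polynomial in~$\max_i|x_i|+\log t$, and thereby obtains an \OR-distillation of the NP-hard language~$L$; by the result of Fortnow and Santhanam~\cite{FortnowS11} (building on~\cite{BodlaenderDFH09}) this implies \containment and the collapse of the polynomial hierarchy. None of that appears in your write-up.

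As a secondary remark: even read as a proposal for Theorem~\ref{theorem:main:ilpfnm:nopk}, your sketch takes a route parallel to but distinct from the paper's. The paper cross-composes from \independentset, packs the~$t$ edge sets into integers~$D(i,j)$ (one per vertex pair, with the~$p$-th bit indicating presence of edge~$\{i,j\}$ in~$G_p$), and extracts the relevant bit via the power-of-two gadget of Lemma~\ref{lemma:power} ($a=b\cdot 2^p$ with~$\Oh(\log t)$ auxiliary variables) together with the decomposition~$D(i,j)=\alpha+\beta+\gamma$ with~$0\leq\alpha\leq 2^p-1$,~$\beta\in\{0,2^p\}$, and~$\gamma$ a multiple of~$2^{p+1}$. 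Your base-$B$ digit packing per clause slot with a conditional-doubling gadget for~$B^s$ is the same idea in a different radix, and the sparsification pass you describe matches the paper's Corollary~\ref{corollary:ilpfnm:sparse:nopk}. But this comparison is moot for the task at hand: the statement under review is the framework theorem, and your proposal leaves it entirely unproved.
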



\section{A kernel lower bound for ILPs with few coefficients}\label{section:ilpfnm:lowerbound}

In this section, we prove Theorem~\ref{theorem:main:ilpfnm:nopk}, i.e., that \ILPF($n+m$) admits no polynomial kernelization unless \containment.
We begin with a technical lemma that expresses multiplication by powers of two in an ILP. The crucial point is that we need multiplication by~$t$ different powers of two, but can afford only~$\Oh(\log^c t)$ variables and coefficients (direct products of variables are not legal in linear constraints).

\begin{lemma}\label{lemma:power}
Let~$a$,~$b$, and~$p$ be variables, and let~$b_{\max},p_{\max}\geq 0$ be integers. Let~$\ell=\lceil \log p_{\max}\rceil$. There is a system of~$6\ell+7$~linear constraints with~$2\ell-1$~auxiliary variables such that all integer solutions have~$0\leq b\leq b_{\max}$,~$0\leq p\leq p_{\max}$, and~$a=b\cdot 2^p$. Conversely, if these three conditions hold then feasible values for the auxiliary variables exist. The system uses coefficients with bit size~$\Oh(p_{\max}\log b_{\max})$ and  all variables have range at most~$\{0,\ldots,b_{\max}\cdot 2^{p_{\max}}\}$.
\end{lemma}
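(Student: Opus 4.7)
The plan is to encode the forbidden bilinearity in $a = b \cdot 2^p$ by expanding $p$ in binary and iteratively doubling a running product according to each bit. Concretely, I would introduce $\ell$ integer auxiliary variables $p_0,\ldots,p_{\ell-1}$ constrained by $0 \le p_i \le 1$, force $p = \sum_{i=0}^{\ell-1} 2^i p_i$ together with $p \le p_{\max}$ (the latter is needed since in general $p_{\max} < 2^\ell$), and introduce $\ell-1$ further auxiliary variables $b_1,\ldots,b_{\ell-1}$, identifying $b_0 := b$ and $b_\ell := a$. The recurrence I would enforce is
\begin{equation*}
b_{i+1} = \begin{cases} b_i & \text{if } p_i = 0,\\ 2^{2^i} b_i & \text{if } p_i = 1,\end{cases}
\end{equation*}
so that $a = b_\ell = b \cdot \prod_{i=0}^{\ell-1} 2^{2^i p_i} = b \cdot 2^{\sum_i 2^i p_i} = b\cdot 2^p$. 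This contributes $\ell + (\ell-1) = 2\ell-1$ auxiliary variables, matching the stated bound.

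To linearize each step without taking a direct product of two variables, I would use a standard big-$M$ switch with $M := b_{\max} \cdot 2^{p_{\max}}$: for each $i$, the four inequalities
\begin{align*}
b_{i+1} &\ge b_i - M p_i, & b_{i+1} &\le b_i + M p_i,\\
b_{i+1} &\ge 2^{2^i} b_i - M(1-p_i), & b_{i+1} &\le 2^{2^i} b_i + M(1-p_i)
\end{align*}
collapse to $b_{i+1} = b_i$ when $p_i = 0$ and to $b_{i+1} = 2^{2^i} b_i$ when $p_i = 1$, contributing $4\ell$ constraints. Together with $2\ell$ constraints for the bounds $0 \le p_i \le 1$, two inequalities for the decomposition of $p$, two for $a = b_\ell$, and the bounds on $b$ and $p$, standard accounting totals $6\ell+7$ constraints. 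The only coefficients of non-constant size are $M = b_{\max}\cdot 2^{p_{\max}}$, $2^{p_{\max}}$ (from the decomposition of $p$, whose largest bit weight is $2^{\ell-1}\le p_{\max}$) and the repeated-squaring constants $2^{2^i} \le 2^{p_{\max}}$, all fitting into the claimed bit size.

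For correctness I would argue both directions. \emph{Soundness}: given a satisfying integer assignment, each $p_i \in \{0,1\}$ and $p = \sum 2^i p_i \in [0, p_{\max}]$; an induction on $i$ using the big-$M$ constraints shows $b_i = b \cdot 2^{\sum_{j<i} 2^j p_j}$, so that $a = b_\ell = b \cdot 2^p$. \emph{Completeness}: given valid $a,b,p$, set the $p_i$ to the binary digits of $p$ and each $b_i$ to the corresponding partial product $b \cdot 2^{\sum_{j<i} 2^j p_j}$; all big-$M$, decomposition, and bound constraints are then satisfied, and each such partial product is at most $b \cdot 2^p \le b_{\max}\cdot 2^{p_{\max}}$, so every variable lies within $\{0,\ldots,b_{\max}\cdot 2^{p_{\max}}\}$.

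The main difficulty I expect is purely bookkeeping: pinning the constraint count to exactly $6\ell+7$ requires fixing a convention for which variable ranges are counted as explicit inequalities, and reaching the stated coefficient bit-size bound requires tracking $M$, $2^{p_{\max}}$, and the squaring constants jointly. The conceptually crucial move, namely replacing the single disallowed product $b \cdot 2^p$ by a logarithmic chain of $\{0,1\}$-switched doublings, is essentially forced by the requirement of only $\Oh(\log p_{\max})$ variables and coefficients, so no further trick beyond the standard binary-expansion-plus-big-$M$ template appears necessary.
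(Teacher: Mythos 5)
Your proposal is correct and follows essentially the same route as the paper's proof: binary expansion of $p$ into $\ell$ bit variables, a chain of $\ell-1$ partial-product auxiliaries linked by big-$M$-switched doublings with factors $2^{2^i}$, and the same accounting of $2\ell-1$ auxiliaries and $\Oh(\ell)$ constraints. The only (cosmetic) difference is that the paper replaces two of your four big-$M$ inequalities per step by the unconditional bounds $a_j\ge a_{j-1}$ and $a_j\le 2^{2^j}a_{j-1}$, which is how it lands on exactly $6\ell+7$ constraints.
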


\begin{proof}
If~$b_{\max}=0$ then constraints~$a=0$,~$b=0$,~$0\leq p$, and~$p\leq p_{\max}$ suffice to prove the lemma; henceforth~$b_{\max}\geq 1$.
Define a constant~$M=b_{\max} \cdot 2^{p_{\max}} +1$; this will be the coefficient with the largest absolute value (bit size as claimed) and also exceed the value of all involved variables. To enforce this and the claimed bounds on~$b$ and~$p$ we add the following constraints.
\begin{align*}
b & \geq 0, & p & \geq 0, & a &\geq 0,\\
b&\leq b_{\max},  & p&\leq p_{\max}, & a &\leq b_{\max}\cdot 2^{p_{\max}} \quad (<M)
\end{align*}
Clearly, we now have~$a,b,p<M$.
Let~$\ell=\lceil \log p_{\max} \rceil$ and add variables~$p_0,\ldots,p_{\ell-1}$ with range~$\{0,1\}$ (this incurs~$2\ell$ constraints) together with a constraint
\begin{align}
p = \sum_{i=0}^{\ell-1} 2^ip_i,
\end{align}
which enforces that the~$p_i$ form a binary encoding of~$p$. Now, we can rewrite~$a=b\cdot 2^p$ as follows (note that this is not added directly as a constraint).
\begin{align*}
a=b\cdot 2^p=b\cdot 2^{\sum_{i=0}^{\ell-1} 2^ip_i}= b \prod_{i=0}^{\ell-1} 2^{2^ip_i} \quad (< M)
\end{align*}
Our strategy is to enforce the partial products~$a_j=b\prod_{i=0}^j 2^{2^ip_i}< M$, for all~$j\in\{0,\ldots,\ell-1\}$, by enforcing~$a_j=2^{2^jp_j}a_{j-1}$; for notational convenience we identify~$a_{\ell-1}:=a$ and~$a_{-1}:=b$ (the remaining~$a_0,\ldots,a_{\ell-2}$ are new auxiliary variables). We add the following constraints for all~$j\in\{0,\ldots,\ell-1\}$:
\begin{align}
a_j &\geq a_{j-1} \label{constraint:aj:lb}\\
a_j &\leq 2^{2^j} a_{j-1} \label{constraint:aj:ub}\\
a_j + M - p_j M &\geq 2^{2^j} a_{j-1} \label{constraint:aj:push}\\
a_j &\leq a_{j-1} + p_j M \label{constraint:aj:pull}
\end{align}
It suffices to show that~$a_j=2^{2^jp_j} a_{j-1}$ for all~$j\in\{0,\ldots,\ell-1\}$. First, observe that constraints~\cref{constraint:aj:lb} and~\cref{constraint:aj:ub} restrict~$a_j$ to range~$\{a_{j-1},\ldots,2^{2^j}a_{j-1}\}$. Second, consider the effect of~$p_j=0$ respectively~$p_j=1$: If~$p_j=0$ then constraint~\cref{constraint:aj:push} is trivially fulfilled since~$M>b_{\max} \cdot 2^{p_{\max}} \geq 2^{2^j} a_{j-1}$, whereas constraint~\cref{constraint:aj:pull} enforces that~$a_j\leq a_{j-1}$; together we get~$a_j=a_{j-1}=2^{2^jp_j}a_{j-1}$ as needed. If instead~$p_j=1$ then constraint~\cref{constraint:aj:pull} is trivially fulfilled whereas constraint~\cref{constraint:aj:push} enforces~$a_j\geq 2^{2^j} a_{j-1}$; together we get~$a_j=2^{2^j} a_{j-1}=2^{2^jp_j} a_{j-1}$ as needed.

Thus, our set of constraints correctly enforces the intended partial products~$a_j$ which implies that~$a=a_{\ell-1}$ has to take the desired value~$b\cdot 2^p$, as claimed.

For the converse, given the above discussion it is easy to check that setting~$p_0,\ldots,p_{\ell-1}$ to the binary expansion of~$p$ and setting~$a_0,\ldots,a_{\ell-2}$ to the values of the corresponding partial products gives a feasible assignment.
\myqed
\end{proof}

Now we are set up to prove the first part of Theorem~\ref{theorem:main:ilpfnm:nopk}.

\begin{lemma}\label{lemma:ilpfnm:nopk}
\ILPF($n+m$) admits no polynomial kernelization or compression unless \containment.
\end{lemma}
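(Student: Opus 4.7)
The plan is an OR-cross-composition from \textsc{Subset Sum} into \ILPF($n+m$), using Lemma~\ref{lemma:power} to squeeze the data of all $t$~input instances into a single ILP whose variable and constraint counts are polylogarithmic in~$t$. The polynomial equivalence relation groups \textsc{Subset Sum} inputs by their number~$n_0$ of items and the bit-width~$W$ of the weights and target. Given $t$ equivalent instances $(w^{(i)}_1,\ldots,w^{(i)}_{n_0},T^{(i)})$ with $i=0,\ldots,t-1$, I pick a block width $B=W+\lceil\log n_0\rceil+1$ large enough that every single-instance sum fits into~$B$ bits, and form the aggregated constants
\[
\tilde w_j=\sum_{i=0}^{t-1} w^{(i)}_j\,2^{iB}\quad(j=1,\ldots,n_0),\qquad \tilde T=\sum_{i=0}^{t-1} T^{(i)}\,2^{iB}.
\]

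The ILP carries binary variables $X_1,\ldots,X_{n_0}$ (a candidate subset shared between all instances), a selection variable $P^*\in\{0,\ldots,t-1\}$ naming an instance, the single equation $\Sigma=\sum_j\tilde w_j X_j$, and a constant number of invocations of Lemma~\ref{lemma:power}. The first invocation materialises $\mu=2^{P^*B}$ as a variable; two more (with $b=V$ and $b=A$ and $p=P^*B$) produce the products $Y_V=V\mu$ and $Y_A=A\mu$ that let me post the digit-extraction constraints
\[
\Sigma=2^B Y_A+Y_V+R,\qquad 0\le R\le\mu-1,\qquad 0\le V\le 2^B-1,\qquad A\ge 0.
\]
Rewriting gives $\Sigma=\mu(2^BA+V)+R$ with $0\le R<\mu$ and $0\le V<2^B$, which uniquely forces $V=\lfloor\Sigma/\mu\rfloor\bmod 2^B$; the choice of~$B$ together with $\Sigma=\sum_i S^{(i)}2^{iB}$ (where $S^{(i)}=\sum_j w^{(i)}_j X_j<2^B$) yields $V=S^{(P^*)}$. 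A symmetric block of power-lemma constraints extracts $V''=T^{(P^*)}$ from the constant~$\tilde T$, and the last constraint is $V=V''$.

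Correctness follows immediately: feasibility means $S^{(P^*)}=V=V''=T^{(P^*)}$, so instance~$P^*$ is a yes-instance, and conversely a yes-input~$i^*$ with witness~$X$ is turned into a feasible ILP assignment by setting $P^*=i^*$ and filling each auxiliary variable with its forced value, whose existence is guaranteed by the ``conversely'' part of Lemma~\ref{lemma:power}. For the size count, the $X_j$ contribute $n_0$~variables and one constraint, each of the $\Oh(1)$ power-lemma gadgets contributes $\Oh(\log(tB))=\Oh(\log t+\log |x|)$ variables and constraints, and the remaining scalar equalities add $\Oh(1)$ more; thus $n+m=\mathrm{poly}(|x|+\log t)$, as required by Definition~\ref{definition:crosscomposition}. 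The coefficient bit-sizes produced by Lemma~\ref{lemma:power} are polynomial in $t\cdot|x|$, which still lets the cross-composition run in polynomial time in $\sum_i|x_i|$, and Theorem~\ref{theorem:orcc} then finishes the proof. The main obstacle is the bookkeeping around the digit extraction: one must pick~$B$ so that no carries pollute the base-$2^B$ expansion of~$\Sigma$, choose the bounds $b_{\max},p_{\max}$ fed into each power-lemma gadget large enough to cover the full ranges of $V,A,V'',A''$, and verify that the decomposition $\Sigma=\mu(2^BA+V)+R$ with $0\le R<\mu$, $0\le V<2^B$ has a unique solution.
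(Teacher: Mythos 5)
Your plan is correct and follows the same blueprint as the paper's proof: an \OR-cross-composition in which the $t$ inputs are packed into large integer constants, a selector variable $p$ names an instance, and Lemma~\ref{lemma:power} is used to decompose a big number into a low part, a target digit, and a high part so that the relevant digit can be read off with only $\Oh(\log t)$ extra variables and constraints. The differences are in the instantiation. The paper composes from \independentset: it builds one constant $D(i,j)$ per potential edge, extracts the single bit $e_{i,j}=D(i,j,p)$ via $D(i,j)=\alpha+\beta+\gamma$ with $\beta=2^p e_{i,j}$, and then writes the independence constraints $x_i+x_j+e_{i,j}\leq 2$ explicitly; this costs $\Oh(n^2\log t)$ variables and constraints. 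You compose from \textsc{Subset Sum}: because the "solution check" there is itself a single linear equation, you can aggregate everything into one number $\Sigma=\sum_j\tilde w_jX_j$ and extract one $B$-bit block via $\Sigma=\mu(2^BA+V)+R$, needing only $\Oh(1)$ invocations of the power gadget and $\Oh(n_0+\log t)$ variables and constraints in total — a leaner composition, at the price of slightly more bookkeeping (choosing $B$ to suppress carries, and feeding the gadget the larger bound $b_{\max}\approx 2^{tB}$ for the high part $A$, which still yields coefficients of polynomial bit size and hence a polynomial-time composition). Both arguments correctly invoke Theorem~\ref{theorem:orcc}; your digit-extraction identity is unique given $0\leq R<\mu$ and $0\leq V<2^B$, and the converse direction is covered by the "conversely" clause of Lemma~\ref{lemma:power}, so no gap remains.
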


\begin{proof}
We give an \OR-cross-composition from the NP-hard \independentset problem. The input instances are of the form~$(G=(V,E),k)$ where~$G$ is a graph and~$k\leq |V|$ is an integer, asking whether~$G$ contains an independent set of size at least~$k$. For the polynomial equivalence relation~$\R$ we let two instances be equivalent if they have the same number of vertices and the same solution size~$k$. It is easy to check that this fulfills Definition~\ref{definition:polynomialequivalencerelation}. For convenience we consider all input graphs to be on vertex set~$V=\{1,\ldots,n\}$, for some integer~$n$.

Let~$t$~$\R$-equivalent instances~$(G_0=(V,E_0),k),\ldots,(G_{t-1}=(V,E_{t-1}),k)$ be given. Without loss of generality we assume that~$t=2^\ell$ for some integer~$\ell$ since otherwise we could copy one instance sufficiently often (at most doubling the input size and not affecting whether at least one of the given instances is \yes).

\textbf{Construction.} We now describe an instance of \ILPF that is \yes if and only if at least one of the instances~$(G_0=(V,E_0),k),\ldots,(G_{t-1}=(V,E_{t-1}),k)$ is \yes for \independentset. We first define an encoding of the~$t$~edge sets~$E_p$ into~$\binom{n}{2}$ integer constants~$D(i,j)$, one for each possible edge~$\{i,j\}\in\binom{V}{2}$ (throughout the proof we use~$1\leq i < j\leq n$):
\begin{align*}
D(i,j):=\sum_{p=0}^{t-1}2^p\cdot D(i,j,p),
&& \mbox{where}
&& D(i,j,p):=\begin{cases}
           1 & \mbox{if~$\{i,j\}\in E_p$,}\\
           0 & \mbox{else.}
          \end{cases}
\end{align*}
In other words, the~$p$-th bit of~$D(i,j)$ is one if and only if~$\{i,j\}$ is an edge of~$G_p=(V,E_p)$. The values~$0\leq D(i,j)\leq 2^t-1$ will be used as constants in the ILP that we construct next.
\begin{enumerate}
 \item We start with a single variable~$p$ that is intended for choosing an instance number; its range is~$\{0,\ldots,t-1\}$. The rest of the ILP will be made in such a way that it ensures that a feasible solution for the ILP will imply that instance~$(G_p,k)$ is \yes for \independentset, and vice versa. \label{step:instancenumber}
 \item Now we will add constraints that allow us to extract the necessary information regarding which of the possible edges~$\{i,j\}$ are present in graph~$G_p$. Recall that the~$p$-th bit of the constant~$D(i,j)$ encodes this. For convenience, we will derive the needed constraints and argue their correctness right away. For each possible edge~$\{i,j\}$ with~$1\leq i<j\leq n$ we introduce a variable~$e_{i,j}$ with the goal of enforcing~$e_{i,j}=1$ if~$\{i,j\}\in E_p$, and~$e_{i,j}=0$ else.
 
 Let~$i,j$ with~$1\leq i < j \leq n$ be fixed (we apply the following for all these choices). Clearly,
 \begin{align*}
  D(i,j)=\sum_{s=0}^{p-1}2^s D(i,j,s) + 2^p D(i,j,p) + \sum_{s=p+1}^{t-1} 2^s D(i,j,s).
 \end{align*}
 We are of course interested in the~$2^p D(i,j,p)$ term, which takes value either~$0$ or~$2^p$. Since the first term~$\sum_{s=0}^{p-1}2^s D(i,j,s)$ is at most~$2^p-1$ and the last is a multiple of~$2^{p+1}$, we will extract it via a constraint
 \begin{align}
 D(i,j)=\alpha+\beta+\gamma,  \label{constraint:edgeextraction}  
 \end{align}
 assuming that we can enforce the following conditions: (i)~$0\leq\alpha\leq 2^p-1$, (ii)~$\beta\in\{0,2^p\}$, and (iii)~$\gamma\in\{0,2^{p+1},2\cdot 2^{p+1},\ldots\}$. (Note that we use new variables~$\alpha_{i,j},\beta_{i,j},\gamma_{i,j},\delta_{i,j},\varepsilon_{i,j}$ for each choice of~$i$ and~$j$, but in the construction the indices are omitted for readability.) 
 This is where Lemma~\ref{lemma:power} comes into the picture as it permits us to enforce the creation of the required values and range restrictions without using overly many variables and constraints. For~(i) we add a variable~$\delta$ and enforce~$\delta=2^p$ by using Lemma~\ref{lemma:power} on~$a=\delta$,~$b=b_{\max}=1$, and~$p$ with~$p_{\max}=t-1$. We then add constraints~$\alpha\geq 0$ and~$\alpha\leq\delta-1$. For~(ii) we apply the lemma on~$a=\beta$,~$b=e_{i,j}$ with~$b_{\max}=1$, and~$p$ with~$p_{\max}=t-1$, enforcing~$\beta=2^pe_{i,j}$. (Note that we want to get~$e_{i,j}=D(i,j,p)$ from~$\beta=2^p\cdot D(i,j,p)$ anyway; this way it is already enforced.) For~(iii) we add a new variable~$\varepsilon\geq 0$ and apply the lemma on~$a=\gamma$,~$b=2\varepsilon$ with~$b_{\max}=2^t-1$ (formally, this requires a new variable~$b'$ and constraint~$b'=2\varepsilon$), and~$p$ with~$p_{\max}=t-1$, enforcing~$\gamma=2^p2\varepsilon=2^{p+1}\varepsilon$. The fact that there are no further restrictions on~$\varepsilon$ effectively allows~$\gamma$ to take on any multiple of~$2^{p+1}$ (and no other values). The upper bound~$b_{\max}$ on~$b=\varepsilon$ for Lemma~\ref{lemma:power} comes from~$D(i,j)\leq 2^t-1$ (we do not require larger values of~$\gamma$ and~$\varepsilon$ since~$\gamma\leq D(i,j)$ and~$\gamma=2^p\varepsilon$).
 
 The most costly application of Lemma~\ref{lemma:power} is for~(iii) where we have~$b_{\max}=2^t-1$ and~$p_{\max}=t-1$. This incurs~$\Oh(\log p_{\max})=\Oh(\log t)$ additional variables and constraints, and uses coefficient bit size~$\Oh(p_{\max}\log b_{\max})=\Oh(t^2)$ (same bounds suffice also for (i) and (ii)). Thus, over all choices of~$1\leq i < j\leq n$, i.e., for getting all the needed edge information, we use~$\Oh(\binom{n}{2}\log t)$ additional variables and constraints. For each variable~$e_{i,j}$ our constraints ensure that it is equal one if~$\{i,j\}$ is an edge in~$G_p$; else it has value zero. \label{step:extractedges}
 \item Now we can finally add the actual edge constraints needed to express the \independentset problem. We add~$n$ variables~$x_1,\ldots,x_n$ with range~$\{0,1\}$, one variable~$x_i$ for each vertex~$i\in V$. For each possible edge~$\{i,j\}\in\binom{V}{2}$, i.e., for all~$1\leq i<j\leq n$ we add the following constraint.
 \begin{align}
 x_i+x_j+e_{i,j}\leq 2  \label{constraint:independence}
 \end{align}
 Finally, we add a constraint
 \begin{align}
 \sum_{i=1}^nx_i\geq k  \label{constraint:target}
 \end{align}
 to ensure that we select at least~$k$ vertices.
 This completes the construction.\label{step:independence}
\end{enumerate} 
Let us now check the number of variables and constraints in our created ILP. The number of variables is dominated by the~$\Oh(\binom{n}{2}\log t)$ variables added in Step~\ref{step:extractedges}, which come from~$\Oh(\binom{n}{2})$ applications of Lemma~\ref{lemma:power}. The same is true for the number of constraints used. Thus both parameters of our target instance are polynomially bounded in the largest input instance plus~$\log t$. Since we postulated no further restrictions on the target ILP (e.g.,~$A$ and~$b$ may have negative coefficients), we will omit a discussion of how to write all constraints as~$Ax\leq b$ with~$x\geq 0$ (here~$x$ is the vector of all variables used) since that is straightforward.
The largest bit size of a coefficient is~$\Oh(t^2)$ hence it is easy to see that the whole ILP can be generated in time polynomial in the total input size (which is roughly~$\Oh(t\cdot n^2)$ from the~$t$ \independentset instances with~$n$ vertices each).
It remains to argue correctness of the construction. 

\textbf{Soundness.} Assume that the created ILP has at least one feasible solution and pick according values for all variables. The goal is to show that instance number~$p$ is \yes for \independentset. To this end, let~$S\subseteq V$ denote the set of vertices~$i$ for which~$x_i=1$. By~\cref{constraint:target} we know that~$|S|\geq k$. Now, for any two vertices~$i,j\in S$ with~$i<j$ it suffices to show that~$\{i,j\}$ is not an edge of the graph~$G_p$. By~\cref{constraint:independence} and~$i,j\in S$ we know that~$e_{i,j}=0$ since~$x_i+x_j+e_{i,j}\leq 2$.
As we already showed in the construction that~$e_{i,j}$ must equal the~$p$-th bit of~$D(i,j)$ this implies that~$\{i,j\}\notin E_p$, as needed.

\textbf{Completeness.} Now, assume that for some~$p^*\in\{0,\ldots,t-1\}$ the instance $(G_{p^*}=(V,E_{p^*}),k)$ is \yes for \independentset and let~$S\subseteq V$ be some independent set of size~$k$ in~$G_{p^*}$. Set~$p=p^*$ and assign the following values to the other variables to get a feasible solution for the ILP:
\begin{itemize}
 \item Set the variables~$x_i$ to one for all~$i\in S$ and to zero for~$i\in V\setminus S$. Clearly this fulfills~\cref{constraint:target}.
 \item Set each edge variable~$e_{i,j}$ to one if and only if~$\{i,j\}$ is an edge of~$G_p$ (and zero else). It is easy to see that together with the~$x$-variables this fulfills all constraints~\cref{constraint:independence} since we only have~$x_i=x_j=1$ if~$\{i,j\}$ is not an edge in~$G_p$ (by choice of~$S$).
 \item For each choice of~$1\leq i<j\leq n$, set~$\delta_{i,j}=2^p$ and set
 \begin{align*}
 \alpha_{i,j}&=\sum_{s=0}^{p-1}2^s D(i,j,s), \\ 
 \beta_{i,j}&=2^p D(i,j,p),\\
 \gamma_{i,j}&=\sum_{s=p+1}^{t-1}2^s D(i,j,s).
 \end{align*}
 Thus~$0\leq \alpha_{i,j}\leq 2^p-1= \delta_{i,j}-1$ and~$\beta=2^pe_{i,j}$. Also, clearly, our choices for~$\alpha$,~$\beta$, and~$\gamma$ fulfill~\cref{constraint:edgeextraction}.
 
 As~$\gamma_{i,j}$ is a multiple of~$2^{p+1}$ we can set~$\varepsilon_{i,j}=\frac{\gamma_{i,j}}{2^{p+1}}$ to agree with~$\gamma_{i,j}=2^{p+1}\varepsilon_{i,j}$ that we enforced via Lemma~\ref{lemma:power}. Similarly,~$\delta_{i,j}=2^p$ and~$\beta_{i,j}=2^pe_{i,j}$ agree with what we enforced by the lemma. Thus, for all three applications of the lemma (per possible edge~$\{i,j\}$) it is guaranteed (by the lemma statement) that feasible values for all auxiliary variables can be found.
\end{itemize}
It follows that the constructed ILP instance has a feasible solution, as required. This completes the correctness part.

Thus we have an \OR-cross-composition from the NP-hard \independentset problem to the \ILPF($n+m$) problem. By Theorem~\ref{theorem:orcc}, this implies that \ILPF($n+m$) has no polynomial kernel or compression unless \containment.
\myqed
\end{proof}

The second part of Theorem~\ref{theorem:main:ilpfnm:nopk} is now an easy corollary. Similarly, we get lower bounds for covering and packing ILP with parameter~$n+m$; the proofs are postponed to Appendix~\ref{appendix:section:corollary:ilpfnm:sparse:nopk} and~\ref{appendix:section:corollary:pilpnm:nopk}. Getting the sparseness for Corollary~\ref{corollary:ilpfnm:sparse:nopk} is easier than what was needed for~$r$-ILPF($n$) in~\cite{Kratsch13_sparseilp} since both number of constraints and number of variables are bounded in the parameter value.

\begin{corollary}\label{corollary:ilpfnm:sparse:nopk}
\ILPF($n+m$) restricted to instances that have at most~$3$ variables per constraint (row-sparseness) and with each variable occurring in at most~$3$ constraints (column-sparseness) admits no polynomial kernel or compression unless \containment.
\end{corollary}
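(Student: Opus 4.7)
The plan is to derive the corollary from Lemma~\ref{lemma:ilpfnm:nopk} by a polynomial parameter transformation (PPT) from \ILPF($n+m$) to its restriction to instances with row- and column-sparseness at most~$3$. Given an \ILPF instance with $n$ variables and $m$ constraints I would build an equivalent sparse instance with $\Oh(nm)$ variables and $\Oh(nm)$ constraints, which is polynomial in the source parameter $n+m$. Since \ILPF is NP-complete, this PPT combined with Lemma~\ref{lemma:ilpfnm:nopk} rules out a polynomial kernel or compression for the sparse variant, which is exactly what the corollary claims. The reason things are easy here, as the author hints, is that both $n$ and $m$ are part of the parameter, so we can freely blow up both by a polynomial factor.

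The construction proceeds in two standard rewriting steps. To reduce row-sparseness, for each original constraint with nonzero terms $a_{i,j_1} x_{j_1}, \ldots, a_{i,j_{n_i}} x_{j_{n_i}}$ and right-hand side $b_i$ I introduce partial-sum variables $s_{i,1}, \ldots, s_{i,n_i}$ and replace the constraint by the equality chain $s_{i,1} = a_{i,j_1} x_{j_1}$ and $s_{i,\ell} = s_{i,\ell-1} + a_{i,j_\ell} x_{j_\ell}$ for $\ell \geq 2$, plus the single-variable inequality $s_{i,n_i} \leq b_i$. Each such constraint contains at most three variables, and every partial-sum variable $s_{i,\ell}$ appears in at most three constraints. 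To then reduce column-sparseness, for each original variable $x_j$ with $d_j$ occurrences in the new system I introduce copies $x_j^{(1)}, \ldots, x_j^{(d_j)}$, substitute the $\ell$-th occurrence of $x_j$ by $x_j^{(\ell)}$, and link the copies by the 2-variable chain-equalities $x_j^{(\ell+1)} = x_j^{(\ell)}$. A direct count then shows that every copy lies in at most two chain-equalities and one partial-sum equality, so every variable in the final system appears in at most three constraints.

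Equivalence is routine in both directions: forward, I set each copy of $x_j$ equal to $x_j$ and propagate the prefix sums; backward, the chain-equalities force all copies of $x_j$ to agree and the partial-sum chain reconstructs $\sum_j a_{i,j} x_j$, so the final inequality enforces the original constraint. The main thing to be careful about is the sparseness bookkeeping: I am treating an equality as a single constraint, which matches the ``linear (in)equalities'' formulation of \ILPF used throughout the paper; if one insists on pure $Ax \leq b$ form, then splitting each equality into two inequalities would push the column count to four, and one would need a mild rearrangement (for example spacing out the occurrences so each copy appears in only one $\leq$-inequality on either side of a partial-sum step) to restore the bound of three. Apart from this bookkeeping point, the argument is straightforward and invokes only the PPT framework together with Lemma~\ref{lemma:ilpfnm:nopk}.
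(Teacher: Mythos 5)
Your proposal is correct and follows essentially the same route as the paper: a polynomial parameter transformation from \ILPF($n+m$) via partial-sum variables to get row-sparseness~$3$, followed by chained variable copies to get column-sparseness~$3$, with an $\Oh(nm)$ blow-up that is polynomial in the parameter. The paper likewise treats equalities as single constraints (consistent with its ``(in)equalities'' formulation of \ILPF), so your bookkeeping caveat does not arise there.
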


\begin{corollary}\label{corollary:pilpnm:cilpnm:nopk}
\coverILP($n+m$) and \packingILP($n+m$) do not admit polynomial kernelizations or compressions unless \containment.
\end{corollary}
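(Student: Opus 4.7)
The plan is to deduce both no-polynomial-kernel statements from Lemma~\ref{lemma:ilpfnm:nopk} via a polynomial parameter transformation (equivalently, by rerunning the cross-composition and massaging its output) that sends an \ILPF($n+m$) instance to an equivalent \coverILP($n+m$) (respectively \packingILP($n+m$)) instance while blowing up $n$ and $m$ only by a constant factor. Combined with Theorem~\ref{theorem:orcc}, this yields the two claimed lower bounds.

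The key observation is that every variable in the ILPF instance produced by Lemma~\ref{lemma:ilpfnm:nopk} carries an explicit upper bound $U_j$, inherited from Lemma~\ref{lemma:power}. For each $x_j$ I would introduce a complementary variable $\bar x_j\geq 0$ and the relation $x_j+\bar x_j=U_j$ (as two inequalities), and then apply the substitution $-a\cdot x_j = a\cdot\bar x_j - a\cdot U_j$ in every constraint. This pushes all sign changes onto the right-hand side, leaving only non-negative coefficients on the left; any constraint whose resulting right-hand side turns out to be negative is trivially satisfied (the left-hand side is a non-negative integer combination of non-negative variables) and can be dropped. After this step all remaining constraints have the form $a^Ty\leq b$ with $a,b\geq 0$, i.e., they are already in \packingILP template, and the original ``$\sum_i x_i\geq k$'' condition from the Independent Set encoding becomes the packing objective $c^Ty\geq k$ with $c$ the indicator of the vertex variables. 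For \coverILP I would perform one further complement substitution $z_j=U_j-y_j$: each $a^Ty\leq b$ becomes $a^Tz\geq a^TU-b$ (cover form, after clamping trivially-satisfied constraints as above), and the objective dually becomes $\sum_i z_i\leq n-k$ with coefficients in $\{0,1\}$.

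The main obstacle is bookkeeping around the auxiliary equalities $x_j+\bar x_j=U_j$, one of whose two inequalities points in the ``wrong'' direction for the target form. I would handle this by applying the same complement-substitution trick to these auxiliary constraints themselves, and by choosing the objective coefficients so that the implicit upper bounds imposed by $c^Tz\leq k$ in the cover case (and the explicit bounds inherited in the packing case) are consistent with the intended variable ranges; otherwise one risks ``spurious'' feasible cover/packing solutions that assign auxiliary variables in a way which has no ILPF counterpart. Once this bookkeeping is in place, soundness and completeness transfer almost verbatim from the correctness part of Lemma~\ref{lemma:ilpfnm:nopk}: any feasible cover/packing solution decodes to a feasible ILPF assignment and hence, by Lemma~\ref{lemma:ilpfnm:nopk}, to an independent set of size $k$ in some $G_{p^*}$, and conversely.
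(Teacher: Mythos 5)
You start from the same key observation as the paper: every variable $z$ of the \ILPF instance produced in Lemma~\ref{lemma:ilpfnm:nopk} carries an a priori bound $0\leq z\leq B(z)$, so one can introduce a complement $\hat z=B(z)-z$ and substitute it for every occurrence of $z$ with a negative coefficient, pushing the sign change into the right-hand side. The genuine gap is in how the complementarity relation $z+\hat z=B(z)$ is enforced. In the covering (resp.\ packing) template only one of its two defining inequalities is expressible as a constraint, and without the other direction the substitution $-\gamma z\mapsto \gamma\hat z-\gamma B(z)$ is not an equivalence but a relaxation, so the instance acquires spurious feasible points. Concretely, in your packing version the constraint $a_j\geq a_{j-1}$ from Lemma~\ref{lemma:power} normalizes to $a_{j-1}-a_j\leq 0$ and after substitution reads $a_{j-1}+\hat a_j\leq B(a_j)$; a solution may simply set $\hat a_j=0$, which erases the lower bound on $a_j$ and with it the entire multiplication gadget, so soundness fails. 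Your proposed fix of ``applying the same complement-substitution trick to these auxiliary constraints themselves'' is circular: rewriting the wrong-direction inequality $z+\hat z\geq B(z)$ via the complements merely reproduces the right-direction inequality $z+\hat z\leq B(z)$, so nothing new is enforced.

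The missing idea --- and the one the paper's proof uses --- is to spend the objective function on exactly this enforcement: keep the covering constraints $z+\hat z\geq B(z)$ and minimize $\sum_z(z+\hat z)$ with budget $\sum_z B(z)$; since each constraint already forces its term to be at least $B(z)$, the budget forces equality throughout, and only then is the substitution exact. The independent-set target $\sum_i x_i\geq k$ is \emph{not} placed in the objective: it is already a covering constraint as it stands, and in the packing case it complements to the ordinary packing constraint $\sum_i\hat x_i\leq n-k$. Your allocation is the reverse --- objective for the independent-set target, auxiliary variables left out of it --- which forfeits the only mechanism the covering/packing template offers for bounding a sum of variables in the ``wrong'' direction. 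With the objective reassigned as above your construction goes through; the extra detour in your cover case (first to packing form, then a second complementation $z_j=U_j-y_j$) is unnecessary but harmless.
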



\section{Polynomial kernelization for row-sparse Cover ILP($\boldsymbol{k}$)}\label{section:rcoverILPk:pk}

In this section, we prove Theorem~\ref{theorem:main:rcoverilpk:pk}, by giving a polynomial kernelization for~$r$-\coverILP($k$), generalizing polynomial kernelizations for~$r$-\textsc{Hitting Set}($k$) \cite{FlumG2006_parameterizedcomplexitytheory,Abu-Khzam10}. Our result uses the sunflower lemma (stated below) that can also be used for~$r$-\textsc{Hitting Set} (as in~\cite{FlumG2006_parameterizedcomplexitytheory}). However, the application is complicated by the fact that the replacement rules for a sunflower of constraints are not as clear-cut as for sets in the hitting set case: Constraints forming a sunflower pairwise overlap on the same variables but with (in general) different coefficients; hence no small replacement is implied. Additionally, we have to bound the number of constraints that have exactly the same set of variables with nonzero coefficients, called \emph{scope}, since the sunflower lemma will only be applied to the set of different scopes. The main work lies in the proof of the following lemma; the polynomial kernelization is given as a corollary.

\begin{lemma}\label{lemma:rcoverILPk:preprocessing}
The~$r$-\coverILP($k$) problem admits a polynomial-time preprocessing to an equivalent instance with~$\Oh(k^{r^2+r})$ constraints and variables.
\end{lemma}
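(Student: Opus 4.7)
The plan is to combine three reduction rules: a variable cleanup, a per-scope deduplication of constraints, and a sunflower-based reduction of the family of distinct scopes. After iterating them to fixpoint, the goal is to bound the number of surviving scopes and the number of constraints per scope polynomially in $k$, and then multiply to reach $\Oh(k^{r^2+r})$.

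I would first normalize variables. Any $x_i$ with $c_i>k$ is forced to $0$ in every cost-$\leq k$ solution and can be substituted out; any $x_i$ with $c_i=0$ contributes nothing to the cost and can be set arbitrarily large, trivially satisfying every constraint in which its coefficient is positive, so both the variable and those constraints are deleted. The surviving variables satisfy $1\leq c_i\leq k$, so any cost-$\leq k$ solution has $x_i\leq k$ and $\sum_i x_i\leq k$; in particular, for every scope $S$ (the set of variables with nonzero coefficient in a constraint, of size at most $r$), the restriction $x|_S$ takes only $\Oh(k^r)$ relevant values.

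For each fixed scope $S$, I would deduplicate the constraints having that scope. Such a constraint $a^T x\geq b$ acts on $\{0,\dots,k\}^{|S|}$ as a threshold function, and all that matters is which relevant assignments it rules out; for every $y\in\{0,\dots,k\}^{|S|}$ that is violated by some constraint with scope $S$, I would retain exactly one witnessing constraint. Next, I would apply the Sunflower Lemma to the family of distinct scopes: once this family exceeds the appropriate polynomial in $k$, one extracts a sunflower $S_1,\dots,S_M$ with core $C$ and pairwise disjoint petals $P_j=S_j\setminus C$. The central pigeonhole observation is that $\sum_i x_i\leq k$ together with the disjointness of the petals forces, whenever $M\geq k{+}1$, at least one sunflower constraint to be satisfied using only core variables, i.e.\ $a^{(j^\ast)}|_C\cdot x|_C\geq b_{j^\ast}$.

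The \emph{main obstacle} is turning this existential, disjunctive consequence into a genuine reduction rule. Unlike the Hitting Set case, where the common core must itself be hit and the entire sunflower collapses to a single constraint on $C$, here the coefficients on $C$ differ across sunflower constraints and no single implied core constraint exists. My plan is to exploit the per-scope bookkeeping together with the fact that only $\Oh(k^r)$ core assignments $y=x|_C$ can appear in a cost-$\leq k$ solution: for every such $y$, we keep a representative sunflower constraint whose core-projection is satisfied by $y$, and by choosing the sunflower to be somewhat larger than this $\Oh(k^r)$ witness-budget one shows that at least one entire sunflower scope is now redundant and may be deleted without creating or losing any cost-$\leq k$ solution. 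Iterating these rules leaves $\Oh(k^r)$ scopes and, under the refined per-scope bookkeeping that is required to keep the sunflower step sound, $\Oh(k^{r^2})$ constraints per scope, totalling $\Oh(k^{r^2+r})$ constraints; since each constraint involves at most $r$ variables, the number of variables is bounded by the same quantity. All rules run in polynomial time and preserve equivalence, since the retained constraints are a subset of the originals (which cannot introduce new solutions), while the pigeonhole argument ensures that no cost-$\leq k$ solution is lost.
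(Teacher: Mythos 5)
You have the right skeleton---variable normalization, deduplication to $\Oh(k^r)$ constraints per scope, the Sunflower Lemma applied to the family of distinct scopes, and the budget-versus-disjoint-petals pigeonhole---and you correctly identify the main obstacle (the core coefficients differ across the sunflower, so no single implied core constraint exists). But the step that is supposed to resolve that obstacle is not sound as stated, and it is exactly where all the work lies. The rule ``for every relevant core assignment $y$, keep a representative sunflower constraint whose core-projection is \emph{satisfied} by $y$'' retains the wrong constraints: a constraint already satisfied by $y$ on the core imposes no restriction under $y$ and certifies nothing. What must be retained, for each of the $(k+1)^s$ core assignments $y$, are constraints \emph{not} satisfied by $y$ on the core alone: if at most $k$ of the (one-per-scope) sunflower constraints are unsatisfied by $y$, all of them must be kept, since each genuinely forces one of its private petal variables; if more than $k$ are unsatisfied, then $y$ is infeasible and one keeps $k+1$ of them as a certificate---a single unsatisfied witness does not suffice, because any one constraint can be satisfied by paying for one petal variable within budget $k$. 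This gives a mark budget of $(k+1)\cdot(k+1)^s$, and the safe deletion is of \emph{individual unmarked constraints} (each shown to be implied by the kept ones for every cost-$\leq k$ solution), not of an ``entire sunflower scope'': a scope whose constraint is needed to force a petal variable under some feasible core assignment cannot be removed wholesale without creating false positives. You would also need to treat the empty-core case separately (reject when $t>k$).

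Your final accounting also does not follow from your rules: with sunflower threshold $t=\Oh(k^r)$ the Sunflower Lemma bounds the number of distinct size-$d$ scopes by $d!\,(t-1)^d=\Oh(k^{r^2})$, while deduplication bounds the constraints per scope by $(k+1)^d=\Oh(k^r)$; you state these two factors the other way around (``$\Oh(k^r)$ scopes'' and ``$\Oh(k^{r^2})$ constraints per scope''), and neither of your figures is derived from anything you set up. The product happens to equal $\Oh(k^{r^2+r})$, but that appears coincidental rather than a consequence of the argument.
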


Before we turn to the proof, we give a lemma that captures some initial reduction arguments including a bound on the number of constraints with the same scope.

\begin{lemma}\label{lemma:rcoverilpk:basicreduction}
Given an instance~$(A,b,c,k)$ of~$r$-\coverILP($k$) we can in polynomial time reduce to an equivalent instance~$(A',b',c',k)$ such that:
\begin{enumerate}
\item No constraint is satisfied if all variables in its scope are zero, i.e.,~$b'_i\geq 1$.
\item The cost function~$c'^T$ is restricted to~$1\leq c'_i\leq k$.
\item All feasible solutions with~$c'^Tx\leq k$ have~$x_i\in\{0,\ldots,k\}$ for all~$i$.
\item There are at most~$(k+1)^d$ constraints for any scope of at most~$d$ variables.
\end{enumerate}
\end{lemma}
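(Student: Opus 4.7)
The plan is to establish the four properties by a short sequence of reduction rules, applied in the order the properties are listed, and to check correctness and polynomial runtime for each.

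Property~1 is the easiest: any constraint with $b_i \leq 0$ is vacuous since $A$ and $x$ are non-negative, so I simply discard it. For Property~2 I would handle the two extreme cost ranges separately. If $c_j = 0$ then $x_j$ does not contribute to $c^T x$ and can be pushed arbitrarily high; every constraint in whose scope $x_j$ lies can be satisfied by choosing $x_j \geq \lceil b_i/a_{ij} \rceil$, so I delete all such constraints and then the variable itself. If $c_j > k$ then $x_j \geq 1$ already violates $c^T x \leq k$, so I fix $x_j := 0$ and drop the variable from every constraint; should some constraint now collapse to $0 \geq b_i$ with $b_i \geq 1$, I output a trivial no-instance. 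After this, $1 \leq c_j \leq k$ for every surviving variable, and Property~3 is automatic since $c_j \geq 1$ combined with $c^T x \leq k$ forces $x_j \leq k$ in any feasible solution.

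The only genuinely new argument is Property~4. For each scope $S$ with $|S| \leq d$ I would enumerate the valuations $v \in \{0,\ldots,k\}^S$; for each $v$, if at least one constraint with scope $S$ is violated by $v$, I mark one such constraint as \emph{kept}, and afterwards I discard every unmarked scope-$S$ constraint. This clearly leaves at most $(k+1)^{|S|} \leq (k+1)^d$ constraints per scope. For correctness, suppose $x^*$ satisfies every kept constraint. By Property~3, $x^*|_S$ lies in $\{0,\ldots,k\}^S$; if some original scope-$S$ constraint were violated by $x^*$, then the valuation $v := x^*|_S$ would have had a kept violator, which $x^*$ must satisfy --- a contradiction. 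Hence $x^*$ satisfies every original constraint.

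The main obstacle I foresee is bookkeeping rather than ideas: I need to apply the rules in an order that preserves earlier invariants. Property~2 never modifies the $b_i$, so Property~1 survives it; Property~3 is thus available when I invoke it inside the correctness proof of Property~4. All four rules can be implemented in time polynomial in the size of $(A,b,c,k)$.
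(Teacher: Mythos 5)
Your proposal is correct and follows essentially the same route as the paper: discard constraints that are vacuous at $x=0$, remove zero-cost variables together with their constraints, delete (i.e.\ fix to zero) variables with $c_j>k$ while keeping their constraints, deduce $x_i\le k$ from $c_i\ge 1$, and for the scope bound keep one violated constraint per infeasible assignment in $\{0,\dots,k\}^S$. Your extra observation that a constraint whose scope becomes empty with $b_i\ge 1$ yields a trivial \no-instance is a detail the paper leaves implicit, but it does not change the argument.
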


\begin{proof}
Recall that we are considering covering constraints of the form~$A[i,\cdot]x\geq b[i]$.
Constraints that are satisfied when all their variables are zero, are satisfied by all~$x\geq 0$ and may be discarded (as we already require~$x\geq 0$). If~$c_i=0$ for some variable~$x_i$ then we can set~$x_i$ to arbitrarily large values and thus satisfy all constraints containing it (here we use monotonicity of covering constraints); we may thus safely delete any such variable and all constraints containing it.
Similarly, if~$c_i>k$ for some~$x_i$ then already when setting~$x_i\geq 1$ we cannot get cost~$c^Tx$ at most~$k$; thus we delete such variables (but not their constraints).
Consequently, we get~$1\leq c_i\leq k$ for all remaining variables~$x_i$. This also limits the possible values with~$c^Tx\leq k$ to~$x_i\in\{0,\ldots,k\}$. 

Finally, consider any set of~$d$ variables with more than~$(k+1)^d$ constraints having exactly this scope. It is clear that there are only~$(k+1)^d$ possible assignments to the variables that do not violate the maximum cost of~$k$. Each constraint can rule out some of those. It is therefore sufficient to keep only one constraint for each infeasible assignment, as all further constraints are redundant and may be deleted, giving the claimed bound. Note that each constraint has at most~$r$ variables in its scope, hence we can perform this reduction in time polynomial in~$n$.
\myqed
\end{proof}

We recall sunflowers and the sunflower lemma of Erd\H{o}s and Rado~\cite{ErdosR1960}.

\begin{definition}[\cite{ErdosR1960}]\label{definition:sunflower}
Let~$\F$ denote a family of sets. A \emph{sunflower} in~$\F$ of cardinality~$t$ and with \emph{core}~$C$ is a collection of~$t$~sets~$\{F_1,\ldots,F_t\}\subseteq \F$ such that~$F_i\cap F_j=C$ for all~$i\neq j$. The sets~$F_1\setminus C,\ldots,F_t\setminus C$ are called the \emph{petals} of the sunflower; they are pairwise disjoint. The core~$C$ may be empty.
\end{definition}

\begin{lemma}[Sunflower Lemma~\cite{ErdosR1960}]\label{lemma:sunflowerlemma}
Let~$\F$ denote a family of sets each of size~$d$. If the cardinality of~$\F$ is greater than~$d!\cdot(t-1)^d$ then~$\F$ contains a sunflower of cardinality~$t$, and such a sunflower can be found in polynomial~time.
\end{lemma}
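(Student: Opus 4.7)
The plan is to prove the Sunflower Lemma by induction on the common set size $d$, following the classical argument of Erdős and Rado, and to observe that every step of the induction is constructive and runs in polynomial time.

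For the base case $d=1$, the family $\F$ consists of singletons. Since $1!\cdot(t-1)^1 = t-1$, having $|\F|>t-1$ singletons means there are at least $t$ distinct singletons; any $t$ of them form a sunflower with empty core. The inductive step is where the real work happens: assume the lemma holds for families of sets of size $d-1$, and let $\F$ be a family of $d$-sets with $|\F|>d!\cdot(t-1)^d$. First I would greedily pick a maximal collection $F_1,\ldots,F_s\in\F$ of pairwise disjoint sets. If $s\geq t$, we are done: these sets themselves form a sunflower with empty core. Otherwise $s\leq t-1$, so the union $U=\bigcup_{i=1}^s F_i$ has at most $(t-1)d$ elements, and by maximality every set in $\F$ meets $U$.

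By pigeonhole, some element $x\in U$ must lie in at least
\[
\frac{|\F|}{(t-1)d} \;>\; \frac{d!\,(t-1)^d}{(t-1)d} \;=\; (d-1)!\,(t-1)^{d-1}
\]
sets of $\F$. Collect these sets and delete $x$ from each of them, obtaining a family $\F'$ of sets of size $d-1$ with $|\F'|>(d-1)!(t-1)^{d-1}$. By the induction hypothesis, $\F'$ contains a sunflower of cardinality $t$; reinserting $x$ into every one of these $t$ sets yields a sunflower of cardinality $t$ in $\F$ whose core is the original core together with $x$ (the petals remain disjoint because $x$ was removed from all of them and re-added uniformly).

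The procedure is constructive: at each recursive step we greedily build a maximal disjoint subfamily (straightforward in polynomial time by scanning $\F$), and if it is too small we count element multiplicities in $U$ to locate $x$ in polynomial time. The recursion depth is $d$, and each level inspects at most $|\F|$ sets, so the total running time is polynomial in $|\F|$ and the size of the ground set. I do not expect any real obstacle here; the only subtle point is the arithmetic that the threshold $d!(t-1)^d$ drops to exactly $(d-1)!(t-1)^{d-1}$ after dividing by $(t-1)d$, which is precisely what makes induction go through.
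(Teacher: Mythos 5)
Your proof is correct and is precisely the classical Erd\H{o}s--Rado induction on $d$ (greedy maximal disjoint subfamily, pigeonhole on an element of the bounded union, recurse on the link of that element), which is exactly the argument the paper relies on by citation~\cite{ErdosR1960} without reproducing it. The constructive/polynomial-time observation and the arithmetic $d!(t-1)^d/((t-1)d)=(d-1)!(t-1)^{d-1}$ are both handled correctly.
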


\begin{proof}[Proof of Lemma~\ref{lemma:rcoverILPk:preprocessing}]
To begin with, we apply Lemma~\ref{lemma:rcoverilpk:basicreduction} in polynomial time. Afterwards, for each constraint scope with~$d\leq r$ variables, there are at most~$(k+1)^d=\Oh(k^r)$ constraints with that scope.
Now, we will apply the sunflower lemma to the set of all scopes of size~$d$, for each~$d\in\{1,\ldots, r\}$. If there are more than~$d!\cdot(t-1)^d=\Oh(t^r)$ sets then we find a sunflower consisting of~$t$ sets (scopes). We will show how to remove at least one constraint matching one of the scopes, when~$t$ is sufficiently large (and polynomially bounded in~$k$).

If, instead, the number of scopes is at most~$d!\cdot(t-1)^d$ (for all~$d$) then we can bound the total number of constraints as follows: We have~$r$ choices for~$d\in\{1,\ldots, r\}$. For each~$d$ we have at most~$d!\cdot(t-1)^d=\Oh(t^r)$ constraint scopes. For each scope there are at most~$\Oh(k^r)$ constraints. In total this gives a bound of~$r\cdot \Oh(t^r) \cdot \Oh(k^r) = \Oh(t^r\cdot k^r)$.
Since each constraint has at most~$r$ variables we get the same bound (in~$\Oh$-notation) for the number of variables.

\textbf{Removing a constraint.} Let us now see how to find a redundant constraint when there are more than~$d!\cdot(t-1)^d$ constraint scopes for some~$1\leq d\leq r$. We will also derive an appropriate value for~$t$ (at least~$t>k$).
Consider a~$t$-sunflower in the set of constraint scopes of size~$d$ (which we can get in polynomial time from the sunflower lemma). Let its core be denoted by~$C=\{x_1,\ldots,x_s\}$, with~$0\leq s<d\leq r$, and its pairwise disjoint petals by~$\{y_{1,s+1},\ldots,y_{1,d}\},\ldots,\{y_{t,s+1},\ldots,y_{t,d}\}$.
(Note that~$s<d$ is needed since all sets are different, which requires nonempty petals; else they would all equal the core.)
Thus there must be constraints in the ILP matching these scopes. We arbitrarily pick one constraint for each scope:
\begin{align*}
a_{1,1}x_1+a_{1,2}x_2+\ldots+a_{1,s}x_s+a_{1,s+1}y_{1,s+1}+\ldots+a_{1,d}y_{1,d}&\geq b_1\\
a_{2,1}x_1+a_{2,2}x_2+\ldots+a_{2,s}x_s+a_{2,s+1}y_{2,s+1}+\ldots+a_{2,d}y_{2,d}&\geq b_2\\
\vdots\quad\quad\quad\quad\quad\quad\quad\quad\quad\quad\quad\quad&\\
a_{t,1}x_1+a_{t,2}x_2+\ldots+a_{t,s}x_s+a_{t,s+1}y_{t,s+1}+\ldots+a_{t,d}y_{t,d}&\geq b_t
\end{align*}
Note that to keep notation simple the indexing of the variables and coefficients is only with respect to the sunflower and makes no assumption about the actual numbering within~$Ax\geq b$; all arguments are local to the sunflower.

First, let us note that we may return \no~(or a dummy \no-instance) if the core is empty and~$t>k$: That would give us more than~$k$ constraints on disjoint sets of variables that each require at least one nonzero variable; this is impossible at maximum cost~$k$. In the following, assume that~$s\geq 1$.

Since each variable takes values from~$\{0,1,\ldots,k\}$ (by Lemma~\ref{lemma:rcoverilpk:basicreduction}), there are at most~$(k+1)^s$ possible assignments for the~$s$ core variables (in fact there are even less since the sum is at most~$k$). It is clear that assigning zero to all core variables does not lead to a feasible solution since each of the~$t$ constraints requires at least one nonzero variable and the petal variables are disjoint. However, unlike for~$r$-\textsc{Hitting Set}, assigning one to a single core variable is not necessarily sufficient to satisfy all constraints, and the value of each variable for a constraint might be quite different. Thus, we may not simply replace the constraints of the sunflower by the restriction of one constraint to the core variables.

To cope with this difficulty we employ a marking strategy. We check all~$(k+1)^s$ assignments of choosing a value from~$\{0,1,\ldots,k\}$ for each core variable. It is possible that some of the constraints are already satisfied by this core assignment due to the monotone behavior of covering constraints. If more than~$k$ constraints need an additional nonzero variable (which would be a petal variable) then clearly this core assignment is infeasible. In this case we arbitrarily mark~$k+1$ constraints that are not satisfied by the core assignment alone (i.e., if all their other variables would be zero); these serve to prevent the core assignment from being chosen. If at most~$k$ constraints are not yet satisfied then we mark all of them. Clearly, in total we mark at most~$(k+1)\cdot (k+1)^s$ constraints.

We will now argue that all unmarked constraints can be deleted. Clearly, deletion can only create false positives, so consider a solution to the instance obtained by deleting any unmarked constraint. If the assignment does not also satisfy the removed constraint (where we take value zero for variables that do not occur after deletion) then, in particular, the same is true for the core assignment made by this assignment. Hence, while marking constraints with respect to this core assignment we must have marked~$k+1$ other constraints (that are also not satisfied by the core assignment alone), which are hence not deleted. However, these other constraints cannot all be satisfied with a budget of at most~$k$, a contradiction. Thus, deleting all unmarked constraints is safe.

Therefore, if we have a sunflower with core size~$s$ and more than~$(k+1)\cdot (k+1)^s$ constraints then our marking procedure allows us to delete at least one constraint. Thus, allowing for core size~$s$ up to~$r-1$, we set~$t=(k+1)\cdot(k+1)^{r-1}+1=\Oh(k^r)$. While we can find~$t$-sunflowers (via the sunflower lemma) we can always delete at least one constraint. 
Our earlier discussion at the start of the proof now gives a bound of~$\Oh(t^r\cdot k^r) = \Oh(k^{r^2+r})$
on the number of constraints and variables achieved by this reduction process. This completes the proof.
\myqed
\end{proof}

\begin{corollary}\label{corollary:rcoverILPk:pk}
The~$r$-\coverILP($k$) problem admits a polynomial compression to size~$\Oh(k^{r^2+2r})$ and a polynomial kernelization. 
\end{corollary}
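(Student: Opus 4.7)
The plan is to combine Lemma~\ref{lemma:rcoverILPk:preprocessing} with a bit-level encoding of individual constraints, and then to lift the resulting polynomial compression to a polynomial kernelization using NP-hardness of $r$-\coverILP.

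First I would apply Lemma~\ref{lemma:rcoverILPk:preprocessing} to reduce the input to an equivalent instance with $m = \Oh(k^{r^2+r})$ constraints and variables. By Lemma~\ref{lemma:rcoverilpk:basicreduction} we may further assume that every variable ranges in $\{0,\ldots,k\}$ and that $1 \leq c_i \leq k$, so the cost vector $c$ is encodable in $\Oh(m\log k)$ bits.

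The coefficients in $A$ and $b$ could a priori still be very large, but for the compression they need not be retained explicitly. Each constraint has a scope of size at most $r$, and every variable in the scope takes one of at most $k+1$ values, so the constraint is fully determined by the subset of $\{0,\ldots,k\}^r$ on which it is satisfied. I would store this subset as a truth table of at most $(k+1)^r = \Oh(k^r)$ bits, together with the identities of the $\leq r$ scope variables (an additional $\Oh(r\log m) = \Oh(r^3 \log k)$ bits, negligible). Summing over the $m$ constraints yields a total compressed size of $\Oh(m \cdot k^r) = \Oh(k^{r^2+2r})$. The compressed instance is naturally an instance of the NP language asking whether there exists $x \in \{0,\ldots,k\}^n$ with $c^T x \leq k$ such that the restriction of $x$ to each constraint's scope lies in the stored feasibility table; this yields the polynomial compression.

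To obtain the polynomial kernelization I would invoke the standard principle that a polynomial compression into \emph{any} language in NP can be turned into a polynomial kernelization for any NP-hard parameterized problem. Since $r$-\coverILP contains $r$-\hittingset and is NP-hard already for $r = 2$ (it generalizes \vertexcover), one composes the compression above with a polynomial-time many-one reduction from the compressed NP language back to $r$-\coverILP; the resulting $r$-\coverILP instance has size polynomial in $k$. The only nontrivial ingredient beyond Lemma~\ref{lemma:rcoverILPk:preprocessing} is this truth-table replacement of potentially huge coefficients by an explicit tabular description of each constraint's feasible set; everything else is either immediate from the preceding lemmas or is the textbook compression-to-kernel transformation.
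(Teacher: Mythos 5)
Your proposal is correct and follows essentially the same route as the paper: apply Lemma~\ref{lemma:rcoverILPk:preprocessing}, replace each constraint by a $(k+1)^d$-size feasibility table over its scope to obtain the $\Oh(k^{r^2+2r})$ compression, and then use the standard NP-completeness/Karp-reduction argument to turn the compression into a kernelization. The only cosmetic difference is that you record the satisfied assignments rather than the infeasible ones, which is immaterial.
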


\begin{proof}[Proof (sketch)]
We know how to reduce to~$\Oh(k^{r^2+r})$ constraints in polynomial time. Since each constraint can equivalently be described by the infeasible assignments that it defines for the variables in its scope, we may encode the instance by replacing each constraint on~$d\leq r$ variables by a~$0/1$-table of dimension~$(k+1)^d$: Each~$0$ entry means that the assignment corresponding to this coordinate is infeasible (e.g., entry at~$(2,3,5)$ tells whether~$x_1=2$,~$x_2=3$, and~$x_3=5$ is feasible for this constraint, when~$(x_1,x_2,x_3)$ is its scope); each~$1$ stands for a feasible entry. For each constraint this requires~$(k+1)^d=\Oh(k^r)$~bits for the table plus~$r\cdot \log(k^{r^2+r})=\Oh(\log k)$~bits for encoding the scope (there are at most~$r$ variables, each one encoded by a number from~$1$ to~$\Oh(k^{r^2+r})$). The cost function~$c^Tx$ requires only the storing of~$\Oh(k^{r^2+r})$ integers with values from~$0$ to~$k$, taking~$\Oh(\log k)$ bits each. In total this gives the claimed size for the compression.

To get a polynomial kernelization let us first observe that the problem described above is clearly in NP since feasible solutions (certificates) can be verified in polynomial time. Thus, by an argument of Bodlaender et al.~\cite{BodlaenderTY11}, we get an encoding as an instance of~$r$-\coverILP($k$) by using the implicit Karp reduction whose existence follows from the fact that~$r$-\coverILP($k$) is complete for NP. The size of this instance is polynomial in~$\Oh(k^{r^2+2r})$, and hence polynomial in~$k$. This completes the polynomial kernelization.
\myqed
\end{proof}


\section{Further results}\label{section:furtherresults}

In this section we state some results for \packingILP($k$) and \coverILP($k$) with column-sparseness~$q$ and row-sparseness~$r$ as additional parameters or constants; the proofs are postponed to Appendix~\ref{appendix:section:proofs:theorem:packingilpkqr:preprocessing} and~\ref{appendix:section:proofs:theorem:coveringilpkqr:preprocessing}.

\begin{theorem}\label{theorem:packingilpkqr:preprocessing}
\packingILP($k+q+r$) is fixed-parameter tractable and admits a polynomial-time preprocessing to~$\Oh(kqr)$ variables and~$\Oh(kq^2r)$ constraints. For~$r$-\packingILP($k+q$) this gives a polynomial kernelization.
\end{theorem}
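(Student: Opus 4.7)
The plan is to generalize the greedy independent-set kernel for bounded-degree \independentset to packing ILPs: first normalize the instance so that every surviving variable can be activated on its own, then either certify \yes via a conflict-graph argument or conclude that the instance is already small. Along the way I will bound variable values in order to obtain the polynomial kernel for constant $r$.

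For the cleanup phase I would delete every variable $x_i$ with $c_i=0$ (it is safe to set such $x_i=0$), every variable for which some constraint has $A_{ji}>b_j$ (such $x_i$ must be $0$ in every feasible solution), and every empty constraint (trivially satisfied since $b_j\geq 0$). If a surviving variable has $c_i\geq k$, then setting $x_i=1$ alone is feasible (because $A_{ji}\leq b_j$ now holds in every constraint containing $x_i$) and immediately witnesses \yes; the same is true for any surviving variable occurring in no constraint. After cleanup, every surviving variable satisfies $1\leq c_i\leq k-1$ and can individually be set to $1$ without violating any constraint.

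Next I build the conflict graph $H$ on the surviving variables, placing an edge whenever two of them share a constraint. Since each variable lies in at most $q$ constraints and each constraint contains at most $r$ variables, $H$ has maximum degree at most $q(r-1)$, so a greedy maximal independent set has size at least $n/(q(r-1)+1)$. If $n>k\cdot(q(r-1)+1)=\Oh(kqr)$, pick any independent set $I$ of size at least $k$, set $x_i=1$ for $i\in I$ and $0$ elsewhere: every constraint then contains at most one activated variable and the cleanup ensures the corresponding single-variable inequality $A_{ji}\leq b_j$, so the assignment is feasible with $c^Tx\geq |I|\geq k$, and we output \yes. Otherwise $n=\Oh(kqr)$, and since each surviving constraint contains at least one variable while each variable lies in at most $q$ constraints, $m\leq q\cdot n=\Oh(kq^2r)$, as claimed. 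Fixed-parameter tractability in $k+q+r$ then follows by running Lenstra's (or Kannan's) ILP algorithm on the reduced instance, whose running time is singly exponential in the number of variables and polynomial in the encoding length.

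To upgrade this reduction into a polynomial kernel for $r$-\packingILP($k+q$) with $r$ constant, I also need to bound the coefficients. Observe that any \yes-instance admits a solution with $c^Tx\leq 2k$: starting from any feasible $x$ with $c^Tx\geq k$, greedily decrement positive entries until a further decrement would drop the objective below $k$; since $c_i\leq k-1$, the resulting value lies in $[k,2k-2]$, forcing $x_i\leq 2k$ for all $i$. With variables restricted to $\{0,\ldots,2k\}$ and at most $r$ variables per constraint, each constraint is fully described by its set of feasible assignments, of size at most $(2k+1)^r=\Oh(k^r)$, and can be stored as a $0/1$-table of that size, yielding a total encoding polynomial in $k+q$. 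Since the resulting problem lies in \textup{NP}, the Bodlaender--Thomass{\'e}--Yeo argument~\cite{BodlaenderTY11} turns this compression into a genuine polynomial kernel of $r$-\packingILP($k+q$) via the implicit Karp reduction. The main obstacle is the correctness of the conflict-graph step: it relies crucially on the pre-deletion of every variable with $A_{ji}>b_j$, so that simultaneously activating a non-conflicting set $I$ reduces each surviving constraint to a single-variable inequality $A_{ji}\leq b_j$ that is automatically satisfied.
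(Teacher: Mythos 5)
Your proposal is correct and follows essentially the same route as the paper: the same cleanup (delete $c_i=0$ variables and variables that single-handedly violate a constraint, answer \yes if some $c_i\geq k$), the same greedy selection of pairwise non-conflicting variables to either certify \yes or bound $n$ by $\Oh(kqr)$ and $m$ by $\Oh(kq^2r)$, FPT via Lenstra, and the same table-based compression plus the NP-completeness/Karp-reduction argument for the kernel when $r$ is constant. The only cosmetic difference is that you bound variable values by $2k$ via a decrementing argument, where the paper simply truncates each variable to $\{0,\ldots,k\}$ (a single coordinate of value $k$ already yields objective at least $k$); both give $\Oh(k^r)$-size tables.
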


\begin{theorem}\label{theorem:coveringilpkqr:preprocessing}
\coverILP($k+q+r$) is fixed-parameter tractable and admits a polynomial-time preprocessing to~$\Oh(kqr)$ variables and~$\Oh(kq)$ constraints. (A polynomial kernelization for~$r$-\coverILP($k+q$) follows from Corollary~\ref{corollary:rcoverILPk:pk}.)
\end{theorem}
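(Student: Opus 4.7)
The plan is to combine Lemma~\ref{lemma:rcoverilpk:basicreduction} with one easy additional reduction and a pigeonhole argument exploiting both sparseness parameters simultaneously. First I would apply Lemma~\ref{lemma:rcoverilpk:basicreduction} in polynomial time, obtaining an equivalent instance with $b'_i \geq 1$, $1 \leq c'_i \leq k$, and $x_i \in \{0,\ldots,k\}$ in every relevant feasible solution. I would then delete any variable appearing in no constraint: since its cost is positive and it occurs in no covering constraint, fixing it to zero is optimal and changes neither feasibility nor the value of $c^T x$.

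The key step is a pigeonhole bound on the number of constraints. Because every remaining variable has cost at least one, any feasible solution with $c^T x \leq k$ sets at most $k$ variables to nonzero values. By column-sparseness~$q$, each such variable appears in at most $q$ constraints, so the nonzero variables of a feasible solution can collectively cover at most $kq$ constraint scopes. Since every remaining constraint has $b'_i \geq 1$ and nonnegative coefficients, it requires at least one of its variables to be nonzero in order to be satisfied. Thus if the current number $m$ of constraints exceeds $kq$, the instance must be a no-instance and we output a trivial no-instance; otherwise $m \leq kq$. Now row-sparseness gives $n \leq m r \leq kqr$, since every remaining variable appears in some constraint and each constraint contributes at most $r$ variables. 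These are the desired bounds $\Oh(kq)$ on constraints and $\Oh(kqr)$ on variables, and every step is polynomial time.

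For fixed-parameter tractability in $k+q+r$, the preprocessed instance has at most $kqr$ variables, each ranging over $\{0,\ldots,k\}$ in any solution of cost at most $k$ by Lemma~\ref{lemma:rcoverilpk:basicreduction}. A brute-force enumeration of at most $(k+1)^{kqr}$ assignments, each verified in polynomial time, decides the reduced instance; alternatively Kannan's algorithm applied to an ILP on $n = \Oh(kqr)$ variables runs in time $n^{\Oh(n)} \cdot \mathrm{poly}$, which is FPT in $k+q+r$. The parenthetical about $r$-\coverILP($k+q$) needs no further argument, since Corollary~\ref{corollary:rcoverILPk:pk} already gives a polynomial kernelization for the smaller parameter~$k$.

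I do not anticipate a serious obstacle: the whole argument pivots on the clean observation that column-sparseness plus a positive lower bound on costs directly caps the number of constraints by $kq$, after which row-sparseness caps the number of variables. No sunflower machinery or involved replacement rules are needed, in contrast with Theorem~\ref{theorem:main:rcoverilpk:pk}; the only thing to double-check is that deleting a variable outside every constraint preserves feasibility and optimum, which is immediate from nonnegativity and $c'_i \geq 1$.
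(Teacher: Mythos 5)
Your proposal is correct and follows essentially the same route as the paper's proof (Lemma~\ref{lemma:coverilpkqr:basicreduction} in the appendix): after normalizing to $b_i\geq 1$ and $c_i\geq 1$, the budget of $k$ nonzero variables times column-sparseness $q$ caps the number of constraints at $kq$, row-sparseness then gives $n\leq kqr$, and fixed-parameter tractability follows from bounded-dimension ILP algorithms. The only cosmetic difference is that you invoke Lemma~\ref{lemma:rcoverilpk:basicreduction} wholesale --- strictly you should only use its first three (genuinely polynomial-time) properties, since its fourth step enumerates $(k+1)^d$ assignments per scope, which is not polynomial when $r$ is a parameter rather than a constant; but your argument never uses that property.
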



\section{Conclusion}\label{section:conclusion}

We have studied different problem settings on integer linear programs.
For \ILPF parameterized by the numbers~$n$ of variables and~$m$ of constraints we ruled out polynomial kernelizations, assuming \noncontainment. This still holds when both column- and row-sparseness are at most three. Adding further new and old results, this settles the existence of polynomial kernels for ILPF and~$r$-ILPF for parameterization by any subset of~$\{n,m,\log C, C\}$ where~$C$ is the maximum absolute value of coefficients.

Regarding covering and packing ILPs, we gave polynomial kernels for~$r$-\coverILP($k$), generalizing~$r$-\hittingset($k$), and for~$r$-\packingILP($k+q$).
Further results and observations give an almost complete picture regarding~$q$ and~$r$ except for the question about polynomial kernels for parameter~$k+q+r$.
We recall that for both problems one can reduce to~$n,m\leq(kqr)^{\Oh(1)}$ but parameterization by~$n$ and~$m$ only admits no polynomial kernelization (Corollary~\ref{corollary:pilpnm:cilpnm:nopk}).


\bibliographystyle{plain}
\bibliography{main}

\begin{thebibliography}{10}

\bibitem{Abu-Khzam10}
Faisal~N. Abu-Khzam.
\newblock A kernelization algorithm for d-hitting set.
\newblock {\em J. Comput. Syst. Sci.}, 76(7):524--531, 2010.

\bibitem{AtamturkS05}
Alper Atamt{\"u}rk and Martin W.~P. Savelsbergh.
\newblock Integer-programming software systems.
\newblock {\em Annals OR}, 140(1):67--124, 2005.

\bibitem{BodlaenderDFH09}
Hans~L. Bodlaender, Rodney~G. Downey, Michael~R. Fellows, and Danny Hermelin.
\newblock On problems without polynomial kernels.
\newblock {\em J. Comput. Syst. Sci.}, 75(8):423--434, 2009.

\bibitem{BodlaenderJK11}
Hans~L. Bodlaender, Bart M.~P. Jansen, and Stefan Kratsch.
\newblock Cross-composition: A new technique for kernelization lower bounds.
\newblock In Schwentick and D{\"u}rr \cite{DBLP:conf/stacs/2011}, pages
  165--176.

\bibitem{BodlaenderTY11}
Hans~L. Bodlaender, St{\'e}phan Thomass{\'e}, and Anders Yeo.
\newblock Kernel bounds for disjoint cycles and disjoint paths.
\newblock {\em Theor. Comput. Sci.}, 412(35):4570--4578, 2011.

\bibitem{DomLS09}
Michael Dom, Daniel Lokshtanov, and Saket Saurabh.
\newblock Incompressibility through colors and ids.
\newblock In {\em ICALP (1)}, volume 5555 of {\em LNCS}, pages 378--389.
  Springer, 2009.

\bibitem{DowneyF1998_parameterizedcomplexity}
Rod~G. Downey and M.~R. Fellows.
\newblock {\em Parameterized Complexity (Monographs in Computer Science)}.
\newblock Springer, November 1998.

\bibitem{DowneyF95_completeness}
Rodney~G. Downey and Michael~R. Fellows.
\newblock Fixed-parameter tractability and completeness ii: On completeness for
  {W[1]}.
\newblock {\em Theor. Comput. Sci.}, 141(1{\&}2):109--131, 1995.

\bibitem{ErdosR1960}
Paul Erd{\H{o}}s and Richard Rado.
\newblock Intersection theorems for systems of sets.
\newblock {\em J. London Math. Soc.}, 35:85--90, 1960.

\bibitem{FlumG2006_parameterizedcomplexitytheory}
J\"org Flum and Martin Grohe.
\newblock {\em Parameterized Complexity Theory (Texts in Theoretical Computer
  Science. An EATCS Series)}.
\newblock Springer, March 2006.

\bibitem{FortnowS11}
Lance Fortnow and Rahul Santhanam.
\newblock Infeasibility of instance compression and succinct {PCPs} for {NP}.
\newblock {\em J. Comput. Syst. Sci.}, 77(1):91--106, 2011.

\bibitem{GareyJ1979}
Michael~R. Garey and David~S. Johnson.
\newblock {\em Computers and Intractability: A Guide to the Theory of
  {NP}-Completeness}.
\newblock W. H. Freeman, 1979.

\bibitem{HarnikN10}
Danny Harnik and Moni Naor.
\newblock On the compressibility of $\mathcal{NP}$ instances and cryptographic
  applications.
\newblock {\em SIAM J. Comput.}, 39(5):1667--1713, 2010.

\bibitem{HermelinKSWW11_arxiv}
Danny Hermelin, Stefan Kratsch, Karolina Soltys, Magnus Wahlstr{\"o}m, and
  Xi~Wu.
\newblock Hierarchies of inefficient kernelizability.
\newblock {\em CoRR}, abs/1110.0976, 2011.

\bibitem{JansenB11}
Bart M.~P. Jansen and Hans~L. Bodlaender.
\newblock Vertex cover kernelization revisited: Upper and lower bounds for a
  refined parameter.
\newblock In Schwentick and D{\"u}rr \cite{DBLP:conf/stacs/2011}, pages
  177--188.

\bibitem{Kannan87}
Ravindran Kannan.
\newblock Minkowski's convex body theorem and integer programming.
\newblock {\em Mathematics of Operations Research}, 12(3):415--440, 1987.

\bibitem{Kratsch13_sparseilp}
Stefan Kratsch.
\newblock On polynomial kernels for sparse integer linear programs.
\newblock {\em CoRR}, abs/1302.3494, 2013.
\newblock To appear in STACS 2013.

\bibitem{Lenstra1983}
Hendrik~W. Lenstra.
\newblock Integer programming with a fixed number of variables.
\newblock {\em Mathematics of Operations Research}, 8:538--548, 1983.

\bibitem{LokshtanovMS12}
Daniel Lokshtanov, Neeldhara Misra, and Saket Saurabh.
\newblock Kernelization - preprocessing with a guarantee.
\newblock In {\em The Multivariate Algorithmic Revolution and Beyond}, volume
  7370 of {\em LNCS}, pages 129--161. Springer, 2012.

\bibitem{NaorSS95}
Moni Naor, Leonard~J. Schulman, and Aravind Srinivasan.
\newblock Splitters and near-optimal derandomization.
\newblock In {\em FOCS}, pages 182--191, 1995.

\bibitem{NederlofLZ12}
Jesper Nederlof, Erik~Jan van Leeuwen, and Ruben van~der Zwaan.
\newblock Reducing a target interval to a few exact queries.
\newblock In {\em MFCS}, volume 7464 of {\em LNCS}, pages 718--727. Springer,
  2012.

\bibitem{PlotkinST1991}
Serge~A. Plotkin, David~B. Shmoys, and {\'E}va Tardos.
\newblock Fast approximation algorithms for fractional packing and covering
  problems.
\newblock In {\em FOCS}, pages 495--504, 1991.

\bibitem{DBLP:conf/stacs/2011}
Thomas Schwentick and Christoph D{\"u}rr, editors.
\newblock {\em 28th International Symposium on Theoretical Aspects of Computer
  Science, STACS 2011}, volume~9 of {\em LIPIcs}, 2011.

\end{thebibliography}


\newpage
\appendix

\section{Omitted proofs from Section~\ref{section:ilpfnm:lowerbound}}


\subsection{Proof of Corollary~\ref{corollary:ilpfnm:sparse:nopk}} \label{appendix:section:corollary:ilpfnm:sparse:nopk}

\begin{proof}[Proof (sketch)]
It suffices to show how an instance of \ILPF($n+m$) can be modified to fulfill the extra spareness restrictions, while not blowing up the numbers of variants and constraints to more than~$(n+m)^{\Oh(1)}$ (formally this constitutes a polynomial parameter transformation from the unrestricted to the sparse case). We make no particular assumption about the constraints other than demanding that they are expressible as~$\sum_{i=1}^n\alpha_ix_i \blacktriangleleft b$ where the~$x_i$ are the variables,~$\alpha_i$'s and~$b$ are integer coefficients, and~$\blacktriangleleft\in\{\leq,\geq,=\}$. By replacing the sums with partial sums~$s_j=\sum_{i=1}^j\alpha_ix_i$ computed via constraints~$s_j=s_{j-1}+\alpha_j x_j$ (or, formally,~$s_j-s_{j-1}-\alpha_jx_j=0$), we can reduce the number of variables per constraint to three. Per original constraint this requires~$\Oh(n)$ new variables for partial sums as well as~$\Oh(n)$ constraints, for a total of~$\Oh(nm)$ additional variables and constraints. To reduce also the number of occurrences of each variable, it suffices to make copies of the variables. Conveniently, the variables for the partial sums occur only two times each (intuitively, once for forcing their own value and once for enforcing the next value, resp., for the occurrence of~$s_n$ in~$s_n\blacktriangleleft b$). Each original variable occurs at most once in each of the~$m$ original constraints. When moving to the partial sums, this is replaced by at most one occurrence for computing the partial sum needed for a given constraint; thus it suffices to have~$m$ copies~$x_i(1),\ldots,x_i(m)$ to use in the partial sum constraints. Clearly, this can be achieved via
\begin{align*}
x_i=x_i(1) && x_i(1)=x_i(2) && \ldots && x_i(m-1)=x_i(m).
\end{align*}
This gives two occurrences per variable~$x_i(j)$ with the third being in place of~$x_i$ in a partial sum constraint. We use an additional~$\Oh(nm)$ variables and~$\Oh(nm)$ further constraints for this part.

It is easy to see that we do not affect feasibility of the ILP since we only make equivalent replacements, respectively introduce copies of variables. In total we use~$\Oh(nm)$ additional variables and constraints, meaning that the final row- and column-sparse ILP has polynomial in~$n+m$ many variables and constraints as needed. If we would have a polynomial kernel or compression for this problem, then combined with the present reduction this would constitute a polynomial compression for the \ILPF($n+m$) problem that we started with, implying \containment. This completes the proof.
\myqed
\end{proof}


\subsection{Proof of Corollary~\ref{corollary:pilpnm:cilpnm:nopk}}\label{appendix:section:corollary:pilpnm:nopk}

\begin{proof}
We prove the corollary for the case of \coverILP($n+m$); the case of packing ILPs works along the same lines. To this end, we start from the instance created in the proof of Lemma~\ref{lemma:ilpfnm:nopk} and turn it into a covering ILP plus a specified bound on the maximum target function value.  The main properties that we require is that each variable in the created ILP (for which we ask for feasibility) has an upper bound on its range, e.g.,~$p\leq t-1$ and~$\beta_{i,j}\leq 2^t$, and that all variables are non-negative. For convenience let us write~$B(z)$ for the upper bound on the range of any variable~$z$ (irrespective of the actual names used in the proof of Lemma~\ref{lemma:ilpfnm:nopk}).

The instance created in the proof of Lemma~\ref{lemma:ilpfnm:nopk} asks whether a given set of linear constraints admits a feasible point with integer coordinates. For all involved variables~$z$ we have~$0\leq z \leq B(z)$ for some a priori known bound~$B(z)$. Let~$n$ denote the number of variables, and let~$z_1,\ldots,z_n$ be an enumeration of the variables. Each constraint can be written as
\begin{align*}
\sum_{i=1}^n\alpha_i z_i \blacktriangleleft b,
\end{align*}
where~$\blacktriangleleft\in\{\leq,\geq,=\}$ and where~$b$ and each~$\alpha_i$ are integer constants. Obviously, each equality constraint can be split into one~$\leq$ and one~$\geq$ constraint (with the same coefficients). Subsequently, we may turn each~$\leq$ constraint into a~$\geq$ constraint be multiplying all its coefficients by~$-1$.

It remains to handle the fact that we may have negative coefficients. This is where the bounds on the variables come in. For each variable~$z_i$ with upper bound~$B(z_i)$ we add a constraint
\begin{align}
z_i+\hat{z}_i\geq B(z_i), \label{constraint:equality:lower}
\end{align}
where~$\hat{z}_i$ is a new variable, constrained to non-negative values by~$\hat{z}_i\geq 0$. To ensure that we actually get~$z_i+\hat{z}_i= B(z_i)$, we let our target function be
\begin{align}
\min \quad \sum_{i=1}^n z_i + \sum_{i=1}^n \hat{z}_i.
\end{align}
By asking for a feasible solution of cost (target function value) at most~$\sum_{i=1}^n B(z_i)$ we ensure that all constraints~\cref{constraint:equality:lower} must be fulfilled with equality (these constraints enforce a lower bound of~$\sum_{i=1}^n B(z_i)$ on the cost). Thus, for all~$i$, we have~$z_i+\hat{z}_i=B(z_i)$. This permits us to replace all occurrences of~$z_i$ with negative coefficient, say~$-\gamma$ where~$\gamma\geq 1$, by plugging in~$z_i=B(z_i)-\hat{z}_i$. We get~$-\gamma z_i= - \gamma B(z_i) + \gamma \hat{z}_i$; moving~$- \gamma B(z_i)$ into the RHS of the inequality (i.e., adding~$\gamma B(z_i)$ to both sides) we have thus reduced the number of negative coefficients by one. Afterwards, should there be a negative RHS coefficient, then we may return a dummy \no instance. (Note however that we only make equivalent transformations so if there was a feasible point for the initial ILP then this translates naturally to a feasible solution for the obtained covering ILP.)

When all this is done, we have only~$\geq$ constraints with non-negative coefficients, and all variables are restricted to non-negative values. Thus we have obtained an equivalent covering ILP with~$2n$ variables and at most~$n+2m$ constraints, which transfers the kernelization lower bound from Lemma~\ref{lemma:ilpfnm:nopk} also to \coverILP($n+m$). The proof for \packingILP($n+m$) is analogous.
\myqed
\end{proof}


\section{Omitted Proofs from Section~\ref{section:furtherresults}}


\subsection{Proof of Theorem~\ref{theorem:packingilpkqr:preprocessing}}\label{appendix:section:proofs:theorem:packingilpkqr:preprocessing}

We give two lemmas that together imply Theorem~\ref{theorem:packingilpkqr:preprocessing}.

\begin{lemma}\label{lemma:packingilpkqr:basicreduction}
Let an instance~$(A,b,c,k)$ of \packingILP($k$) be given and let~$q$ and~$r$ denote the column- and row-sparseness of~$A$. In polynomial time we can compute an equivalent instance~$(A',b',c',k)$ such that
\begin{enumerate}
\item $A'$ has dimension at most~$kq^2r\times kqr$, i.e., the new ILP has at most~$kq^2r$ constraints and at most~$kqr$ variables.
\item $1 \leq c_i \leq k-1$, i.e., each increase in a variable contributes a value of at least one and at most~$k-1$.
\item $A'$ is a submatrix of~$A$, and~$b'$ and~$c'$ are subvectors of~$b$ and~$c$ respectively (this means also that column- and row-sparseness of~$A'$ are at most the same as for~$A$).
\end{enumerate}
\end{lemma}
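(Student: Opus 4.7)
The plan is to apply a few simple polynomial-time reduction rules that either solve the instance outright or chop it down to the desired size, exploiting monotonicity of packing constraints: since $A\geq 0$, decreasing any variable in a feasible solution preserves feasibility. First I would delete every variable $x_i$ with $c_i=0$, since such a variable is harmlessly set to zero, and every variable $x_i$ whose singleton assignment $x_i=1$ (with all others zero) is infeasible---equivalently, for which some constraint $j$ in its scope has $A[j,i]>b_j$---because monotonicity then forces $x_i=0$ in every feasible solution. After these two passes every surviving variable has $c_i\geq 1$ and is singleton-feasible. If any surviving variable has $c_i\geq k$, then setting $x_i=1$ alone certifies $c^Tx\geq k$ and I output a trivial yes-instance; otherwise $1\leq c_i\leq k-1$, giving item~2. (Degenerate cases such as $k=0$ or no variables remaining are handled directly.)

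To bound the number of variables I use a greedy independent-set argument on the conflict graph whose vertices are the surviving variables and whose edges join pairs co-occurring in some constraint. Each vertex has degree at most $q(r-1)$, since it lies in at most $q$ constraints each containing at most $r-1$ other variables. I greedily pick a maximal set $I$ of pairwise non-adjacent variables. If $|I|\geq k$, I select any $k$ of them and set each to $1$: pairwise constraint-disjointness together with singleton-feasibility ensures that at most one of these variables appears in any given constraint and satisfies it, so the combined assignment is feasible, and its objective is at least $k$ because every $c_i\geq 1$; hence I output a trivial yes-instance. Otherwise every variable lies in the closed neighborhood of $I$, giving $n\leq |I|(1+q(r-1))\leq kqr$, the variable bound in item~1.

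For the constraint bound, column-sparseness says each surviving variable lies in at most $q$ constraints, so the total number of variable-constraint incidences is at most $nq\leq kq^2r$. Any constraint whose scope became empty during the reductions is of the form $0\leq b_j$ with $b_j\geq 0$ and may be discarded; each surviving constraint then contains at least one variable, so their number is bounded by the number of incidences. Since all reductions only remove rows and columns of $A$ along with the corresponding entries of $b$ and $c$, the matrix $A'$ is a submatrix of $A$ and $b'$, $c'$ are subvectors, giving item~3. The only delicate point is verifying correctness of the greedy step: this relies jointly on singleton-feasibility of every surviving variable (ensured by the initial cleanup) and on the fact that feasibility composes across pairwise constraint-disjoint variables, both of which become routine after the cleanup; the final bounds are then just elementary counting using row- and column-sparseness.
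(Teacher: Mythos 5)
Your proof is correct and follows essentially the same route as the paper: the same cleanup rules (delete $c_i=0$ and singleton-infeasible variables, answer \yes if some $c_i\geq k$), the same greedy conflict-disjoint set argument giving $n\leq kqr$, and the same incidence count giving $kq^2r$ constraints. Your treatment of empty-scope constraints (discard $0\leq b_j$ since $b_j\geq 0$) is in fact slightly more careful than the paper's parenthetical remark there, but otherwise the arguments coincide.
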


\begin{proof}
Let us begin with a few simple observations: First, we may delete all variables~$x_i$ with~$c_i=0$ since we have no increase in value~$c^T x$ for setting~$x_i$ to a nonzero value but it may make constraints harder to fulfill; it is crucial that all constraints are of the form~$A[i,\cdot]x\leq b_i$. Second, we check for variables~$x_i$ such that setting~$x_i=1$ already makes some constraint infeasible, i.e., the assignment defined by
\begin{align*}
x_j=\begin{cases}
     1 & \mbox{if~$j=i$,}\\
     0 & \mbox{else,}
    \end{cases}
\end{align*}
is infeasible. We may then delete~$x_i$ since it cannot contribute to~$c^Tx$ (for feasible~$x$). Finally, if~$c_i\geq k$ for any~$x_i$ then we may answer \yes since we already know that we can set~$x_i$ to one (all others to zero) which gives~$c^Tx\geq k$. Henceforth~$1\leq c \leq k-1$.

Now, let us see how to get the upper bounds of~$kq^2r$ and~$kqr$ on the number of constraints and variables. We greedily select variables such that no two appear in a shared constraint; we end up with a set~$S$ of variables. If~$|S|\geq k$ then assigning one to each variable in~$S$ is feasible (since each constraint ``sees'' only one variable with value one), and the pay off~$c_ix_i$ per variable~$x_i$ is at least one, so the value of this is at least~$k$. Thus, if~$|S|\geq k$ then we may answer \yes. Otherwise, if~$|S|< k$ then every variable not in~$S$ must be in a shared constraint with some variable~$x_i\in S$. Each such constraint has at most~$r-1$ other variables beyond~$x_i$ and each~$x_i\in S$ appears in at most~$q$ constraints. Thus the total number of variables is at most
\begin{align*}
|S|+(r-1)q|S|\leq kqr.
\end{align*}
Since each variable appears in at most~$q$ constraints this also limits the number of constraints to at most~$kq^2r$. (If some constraint has~$b_i>0$ but no variables with nonzero coefficients then of course the answer is \no instead.) Since we only delete variables and constraints from the original instance it is clear that, e.g.,~$A'$ is a submatrix of~$A$.
\myqed
\end{proof}

According to the lemma we can always reduce to at most~$kq^2r$ constraints and at most~$kqr$ variables. Thus, for~\packingILP($k+q+r$), i.e., parameterized by~$k$,~$q$, and~$r$, this constitutes a preprocessing to an equivalent instance with a polynomial number of coefficients. This also means that the problem is fixed-parameter tractable by an application of Lenstra's~\cite{Lenstra1983} algorithm for bounded dimension ILPs. When~$r$ is constant then this reduction gives a polynomial kernelization, as we show next.

\begin{lemma}\label{lemma:rpackingilpkq:pk}
$r$-\packingILP($k+q$) admits a polynomial kernelization.
\end{lemma}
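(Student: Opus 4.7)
The plan is to combine the preprocessing of Lemma~\ref{lemma:packingilpkqr:basicreduction} with a coefficient-agnostic encoding of the constraints, mirroring the compression-then-Karp-reduction strategy that yielded Corollary~\ref{corollary:rcoverILPk:pk}. First, apply Lemma~\ref{lemma:packingilpkqr:basicreduction}, obtaining an equivalent instance with at most~$\Oh(kqr)=\Oh(kq)$ variables, at most~$\Oh(kq^2r)=\Oh(kq^2)$ constraints, and~$1\leq c_i\leq k-1$ for every remaining variable~$x_i$. Since~$r$ is a constant, the numbers of variables and constraints are already polynomial in~$k+q$; only the magnitudes of the entries of~$A$ and~$b$ remain uncontrolled.

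The key additional observation is that, for packing, we may restrict attention to solutions with entries in~$\{0,1,\ldots,2k-1\}$. Indeed, take a feasible~$x^*$ with~$c^Tx^*\geq k$ minimizing~$\sum_i x^*_i$ (if any exists). For any~$i$ with~$x^*_i\geq 1$ the vector~$x'$ obtained by decreasing~$x^*_i$ by one is still feasible by monotonicity of~$Ax\leq b$ in~$x$, hence by minimality~$c^Tx'<k$, giving~$c^Tx^*<k+c_i\leq 2k-1$. Using~$c_i\geq 1$ we obtain~$\sum_i x^*_i\leq c^Tx^*\leq 2k-2$, so every coordinate of~$x^*$ lies in~$\{0,\ldots,2k-2\}$. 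Consequently the original instance is \yes if and only if there exists such a bounded solution.

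Now compress. For each constraint~$A[i,\cdot]x\leq b_i$ with scope of~$d\leq r$ variables, its restriction to~$\{0,\ldots,2k-1\}^d$ partitions these~$(2k)^d=\Oh(k^r)$ assignments into feasible and infeasible ones, and this partition is all the information about the constraint that is relevant for deciding the bounded problem. Encode each constraint by a~$0/1$-table of size~$(2k)^d=\Oh(k^r)$ indexed by assignments, together with its scope (a list of~$\leq r$ indices, each taking~$\Oh(\log(kq))$ bits), and encode~$c$ by storing the~$\Oh(kq)$ values~$c_i\in\{1,\ldots,k-1\}$ directly. The total bit length is~$\Oh(kq^2\cdot k^r + kq\log(kq))$, which is polynomial in~$k+q$ since~$r$ is a constant. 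This yields a polynomial compression of~$r$-\packingILP($k+q$) into the artificial language~$L$ whose instances are exactly such encoded tables-plus-cost-plus-target.

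Finally, lift this compression to a polynomial kernelization using the argument of Bodlaender, Thomass\'e, and Yeo~\cite{BodlaenderTY11}: the language~$L$ is trivially in NP (a certificate is a bounded vector~$x$, verifiable against the tables in polynomial time), while~$r$-\packingILP($k+q$) is NP-hard, so a fixed polynomial-time Karp reduction translates our compressed instance back into an equivalent instance of~$r$-\packingILP($k+q$) whose size, and hence whose parameter value, is polynomial in~$k+q$. The only nontrivial ingredient is the bounded-range claim of the second paragraph; once this is in hand, everything else is a routine application of the table-encoding technique already used in Corollary~\ref{corollary:rcoverILPk:pk}.
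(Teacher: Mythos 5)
Your proposal is correct and follows essentially the same route as the paper: apply Lemma~\ref{lemma:packingilpkqr:basicreduction}, bound the variable range by $\Oh(k)$, encode each constraint as a $0/1$-table over its scope, and Karp-reduce the resulting NP language back to $r$-\packingILP($k+q$). The only cosmetic difference is your range bound of $\{0,\ldots,2k-2\}$ via a minimality argument, where the paper gets $\{0,\ldots,k\}$ more directly (capping a variable at $k$ preserves feasibility by monotonicity and already forces $c^Tx\geq k$ since $c_i\geq 1$); both suffice.
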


\begin{proof}
To begin, we apply the reduction given by Lemma~\ref{lemma:packingilpkqr:basicreduction}.
With the goal of finding a feasible point~$x$ with~$c^Tx\geq k$ it is clear that it suffices to use a maximum range of~$\{0,\ldots,k\}$ for each variable~$x_i$ (if~$x$ is feasible then so are all~$x'$ with~$0\leq x'\leq x$, and a single position of value~$k$ suffices). Furthermore, as for~$r$-\coverILP($k$), any constraint on~$r$ variables excludes some of the~$|\{0,\ldots,k\}|^r$ possible assignments. Thus we can follow the same argumentation as in the proof of Corollary~\ref{corollary:rcoverILPk:pk} for encoding the feasible points: For each of the~$kq^2r$ constraints, we encode in~$\Oh(r\log(kqr))=\Oh(\log(kq))$ the names of its~$r$ variables, followed by a binary array of bit size~$|\{0,\ldots,k\}|^r=\Oh(k^r)$ that encodes for all points in~$\{0,\ldots,k\}^r$ whether they are feasible (according to this constraint). Finally, we add an encoding of the target function, i.e., a sequence of~$kqr$ values from~$\{0,\ldots,k-1\}$ which takes~$\Oh(kq\log k)$. This gives a compression to bit size
\begin{align*}
\Oh(kq^2r\cdot (k^r+\log(kq))+kq\log k)=\Oh(k^{r+1}q^2+kq^2 \log(kq))=(k+q)^{\Oh(1)}.
\end{align*}

Again, it can be observed that the encoded problem is in NP, as a candidate solution~$x$ can be efficiently verified (solution value and consistency with all local lists of feasible points). Thus, by a Karp reduction back to~$r$-\packingILP($k+q$) we get a polynomial kernelization, as the output size is polynomial in the compression obtained above.
\myqed
\end{proof}


\subsection{Proof of Theorem~\ref{theorem:coveringilpkqr:preprocessing}}\label{appendix:section:proofs:theorem:coveringilpkqr:preprocessing}

Now, we will address Theorem~\ref{theorem:coveringilpkqr:preprocessing}. The arguments are analogous to those needed for Theorem~\ref{theorem:packingilpkqr:preprocessing} so we will focus on the differences.

\begin{lemma}\label{lemma:coverilpkqr:basicreduction}
Let an instance~$(A,b,c,k)$ of \coverILP($k$) be given and let~$q$ and~$r$ denote the column- and row-sparseness of~$A$. In polynomial time we can compute either an equivalent instance~$(A',b',c',k)$ such that
\begin{enumerate}
\item $A'$ has dimension at most~$kq\times kqr$, i.e., the new ILP has at most~$kq$ constraints and~$kqr$ variables.
\item $1\leq c \leq k$, i.e., each increase in a variable contributes a cost of at least one and at most~$k$.
\item $A'$ is a submatrix of~$A$, and~$b'$ and~$c'$ are subvectors of~$b$ and~$c$ respectively (this means also that column- and row-sparseness of~$A'$ is at most the same as for~$A$).
\end{enumerate}
\end{lemma}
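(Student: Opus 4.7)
The plan is to mirror the structure of Lemma~\ref{lemma:packingilpkqr:basicreduction}, but to exploit the dual fact that in a covering ILP \emph{every} constraint must be satisfied, rather than every variable potentially contributing to a maximum. I would begin with three basic cleaning steps. (i) Any constraint with $b_j\leq 0$ is trivially satisfied by $x\geq 0$ since $A$ has non-negative entries, and can be deleted. (ii) Any variable $x_i$ with $c_i=0$ can be set arbitrarily large at zero cost, and by monotonicity of covering this trivially satisfies every constraint containing $x_i$; hence I delete $x_i$ together with every constraint in which it appears, which is a safe equivalent transformation (a cheap solution on the reduced instance always extends back by choosing $x_i$ large enough). (iii) Any variable $x_i$ with $c_i>k$ cannot even take value $1$ without violating the budget, so I fix $x_i=0$ and delete it while keeping its constraints. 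After these steps $1\leq c_i\leq k$ for all surviving $i$ and $b_j\geq 1$ for all surviving $j$, which establishes Item~2.

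Next comes the key counting argument for Item~1. Because every surviving $c_i$ is at least $1$, any feasible $x$ with $c^Tx\leq k$ satisfies $\sum_i x_i\leq c^Tx\leq k$, so at most $k$ variables can be nonzero. Because $b_j\geq 1$ and $A$ has non-negative integer entries, every surviving constraint requires at least one variable in its scope to be nonzero in any feasible solution. Column-sparseness~$q$ then implies that the nonzero variables of a cheap feasible solution jointly appear in at most $kq$ constraints. Consequently, if the current instance has more than $kq$ constraints, no feasible solution of cost at most $k$ can exist, and I output a dummy \no-instance; otherwise the number of constraints is already bounded by $kq$. I then delete every variable that appears in no remaining constraint (safely set to~$0$ since $c_i\geq 1$). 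Each surviving constraint has at most $r$ variables, so the total variable count is at most $kqr$.

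All operations above only delete rows and columns of $A$ together with the corresponding entries of $b$ and $c$, so $A'$ is a submatrix of $A$ and $b',c'$ are subvectors of $b,c$. This yields Item~3 and, in particular, preserves (and can only improve) the row- and column-sparseness bounds. The whole procedure is a list of simple polynomial-time checks. The only non-routine ingredient is the counting argument: unlike in the packing analogue, I do not need a greedy independent-set-style selection, because combining $c\geq 1$ with $b\geq 1$ and the column-sparseness bound $q$ directly yields the sharper constraint count $kq$. I expect the main (mild) obstacle to be the bookkeeping for step~(ii), namely verifying that discarding a cost-zero variable together with all of its constraints really preserves equivalence in both directions for the cost-$\leq k$ feasibility question.
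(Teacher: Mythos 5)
Your proposal is correct and follows essentially the same route as the paper: the same cleaning steps (drop trivially satisfied constraints, drop cost-zero variables with their constraints, drop variables of cost exceeding $k$), followed by the same counting argument that a budget of $k$ with $c\geq 1$ allows at most $k$ nonzero variables, each hitting at most $q$ constraints, so more than $kq$ constraints forces a \no answer, and row-sparseness then gives at most $kqr$ variables. Your explicit removal of variables occurring in no remaining constraint is a small bookkeeping step the paper leaves implicit, but it is needed for the variable bound and does not change the argument.
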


\begin{proof}
The reduction to~$c\geq 1$ and~$b\leq 1$ follows analog to Lemma~\ref{lemma:rcoverilpk:basicreduction}. If~$c_i>k$ for any variable~$x_i$ then we cannot afford to set it to one (or larger) and may delete it from the ILP. Now, each remaining variable~$x_i$ that is set to~$x_i\geq 1$ costs at least one and each remaining constraint requires such a variable. Since we have a total budget of~$k$ and each variable appears in at most~$q$ constraints this limits the number of constraints to at most~$kq$ (else we return \no). The number of variables is at most~$kqr$ since each constraint contains at most~$r$ variables. Clearly, all required operations take only polynomial time and the produced matrix~$A'$ and vectors~$b'$ and~$c'$ are submatrices respectively subvectors of~$A$,~$b$, and~$c$.
\myqed
\end{proof}

Thus the \coverILP($k+q+r$) problem can be reduced to at most~$kq$ constraints and at most~$kqr$ variables, i.e., to a polynomial number of coefficients. Regarding polynomial kernels for~$r$-\coverILP($k+q$) we refer to our more general result for~$r$-\coverILP($k$) by Lemma~\ref{lemma:rcoverILPk:preprocessing} and Corollary~\ref{corollary:rcoverILPk:pk}. Nevertheless, if the column-sparseness~$q$ is indeed small, it is of course possible to obtain a polynomial compression and kernelization via Lemma~\ref{lemma:coverilpkqr:basicreduction} combined with a direct analog of Lemma~\ref{lemma:rpackingilpkq:pk} (the encoding is entirely the same with slightly better bounds due to fewer constraints and variables; the interpretation of the encoded problem is of course as a covering problem).


\section{Basic intractable cases for covering and packing ILPs}\label{appendix:section:intractable}

In this section we briefly discuss the hardness results that can be observed from various hard problems being captured as packing or covering ILP problems. 
The first two are immediate; we spell out the ILP for W[1]-hardness of~$r$-\packingILP($k$) for completeness, but omit a detailed discussion.

\begin{proposition}\label{proposition:independentset}
The~$r$-\packingILP($k$) problem is hard for W[1] for all~$r\geq 2$ by reduction from \independentset($k$).
\end{proposition}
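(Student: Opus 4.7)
The plan is to give a straightforward parameter-preserving reduction from \independentset($k$) to $2$-\packingILP($k$), which then extends to all $r\geq 2$ since a $2$-row-sparse ILP is trivially also $r$-row-sparse. Given an \independentset instance $(G=(V,E),k)$, introduce one variable $x_v$ per vertex $v\in V$, with range restricted by the unary constraints $x_v\leq 1$ and the implicit $x_v\geq 0$. For every edge $\{u,v\}\in E$ add the constraint $x_u+x_v\leq 1$, which prevents both endpoints from being selected. Finally, set $c=(1,\ldots,1)^T$ so that the target condition $c^Tx\geq k$ asks for at least $k$ selected vertices.

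The forward direction is immediate: an independent set $S$ of size at least $k$ yields the $0/1$ assignment $x_v=[v\in S]$, satisfying all constraints and giving $c^Tx=|S|\geq k$. Conversely, any feasible integer $x$ with $c^Tx\geq k$ automatically has each $x_v\in\{0,1\}$ by the $x_v\leq 1$ constraints, and the edge constraints force the support $S=\{v : x_v=1\}$ to be an independent set of size at least $k$.

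The reduction runs in polynomial time, produces nonnegative integer coefficients (in fact only $0$'s and $1$'s), and has row-sparseness exactly $2$ (edge constraints involve two variables, unary constraints involve one). Since the parameter $k$ is preserved verbatim, this is a polynomial parameter transformation from \independentset($k$) to $2$-\packingILP($k$); padding unused variables into each constraint (or simply observing that $2$-row-sparse instances are $r$-row-sparse) extends the hardness to $r$-\packingILP($k$) for every $r\geq 2$.

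There is no real technical obstacle here; the only thing worth double-checking is that the enforcement of $x_v\in\{0,1\}$ fits the packing-ILP format (it does, via the upper-bound constraints $x_v\leq 1$ together with the built-in nonnegativity $x\geq 0$), and that the W[1]-hardness of \independentset($k$) together with the fact that polynomial parameter transformations preserve W[1]-hardness yields the claim.
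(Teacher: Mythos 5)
Your proof is correct and uses essentially the same reduction as the paper: indicator variables $x_v$, edge constraints $x_u+x_v\leq 1$, and the all-ones objective with target $k$. The only difference is that you add the explicit upper bounds $x_v\leq 1$, which the paper's stated ILP omits; this is if anything more careful, since without them an isolated vertex could be assigned an arbitrarily large value.
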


\begin{proof}
Given an instance~$(G=(V,E),k)$ for \independentset($k$), make an indicator variable~$x_v$ for each vertex~$v\in V$, and create the following ILP:
\begin{align*}
\max\quad & \mathbf{1}^Tx\\ 
\mbox{s.t.}\quad& x_u+x_v\leq 1 \quad \forall\{u,v\}\in E\\
&x\geq 0
\end{align*}
where~$\mathbf{1}$ denotes the all one vector of dimension~$|V|$. The created \packingILP($k$) instance asks for a feasible point~$x$ with value~$\mathbf{1}^Tx$ at least~$k$.
\myqed
\end{proof}

\begin{proposition}\label{proposition:hittingset}
The \coverILP($k$) problem is hard for W[2] by reduction from \hittingset($k$).
\end{proposition}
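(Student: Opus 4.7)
The plan is to exhibit the obvious parameterized reduction from \hittingset($k$) based on its textbook ILP formulation, already spelled out in the footnote defining \hittingset. Given an instance $(U,\F,k)$ of \hittingset($k$), I introduce one variable $x_u$ for each $u\in U$, add the covering constraint $\sum_{u\in F} x_u \geq 1$ for each $F\in\F$, set the cost vector $c=\mathbf{1}$, and keep the same budget $k$. This is a valid \coverILP($k$) instance (all coefficients and right-hand sides are non-negative integers, and we require $x\geq 0$ integer with $c^Tx\leq k$), and the parameter is preserved verbatim.

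For correctness I would argue both directions. If $S\subseteq U$ is a hitting set of size at most $k$, then the $0/1$ vector with $x_u=1$ iff $u\in S$ satisfies every constraint and has $c^Tx=|S|\leq k$. Conversely, given any integer $x\geq 0$ with $c^Tx\leq k$ satisfying all constraints, I set $S=\{u:x_u\geq 1\}$; then $|S|\leq \sum_u x_u\leq k$, and each $F\in\F$ must contain some $u$ with $x_u\geq 1$, hence $S\cap F\neq\emptyset$. So $S$ hits $\F$ and has size at most $k$.

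Since the reduction is polynomial time and the parameter $k$ is unchanged, it is a parameterized (indeed, Karp-style) reduction, so the W[2]-hardness of \hittingset($k$) transfers to \coverILP($k$). I anticipate no real obstacle: the only mild subtlety is that \coverILP in principle allows $x_u\geq 2$, but the thresholding $S=\{u:x_u\geq 1\}$ handles this cleanly because the covering constraints are monotone in $x$.
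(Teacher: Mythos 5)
Your reduction is correct and is exactly the one the paper has in mind: it is the textbook ILP formulation of \hittingset given in the paper's footnote (constraints $\sum_{u\in F}x_u\geq 1$, unit costs, budget $k$), which the paper regards as immediate and therefore does not spell out. Your handling of the only subtlety (values $x_u\geq 2$, dealt with by thresholding via monotonicity) is the right observation, so nothing is missing.
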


If we restrict the column-sparseness to any constant~$q\geq 4$ then both \coverILP($k$) and \packingILP($k$) remain W[1]-hard.

\begin{proposition}\label{proposition:ccolumnsparse:w1hard}
Let~$q\geq 4$. The~$q$-\coverILP($k$) and the~$q$-\packingILP($k$) problem are W[1]-hard.
\end{proposition}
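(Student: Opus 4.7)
The plan is to reduce from the W[1]-hard problem \textsc{Small Subset Sum}$(k)$: given nonnegative integers $a_1,\ldots,a_n$, target~$t$, and parameter~$k$, decide whether there is $S\subseteq[n]$ with $|S|\leq k$ and $\sum_{i\in S}a_i=t$. The central difficulty is that packing and covering ILPs only admit one-sided inequalities with nonnegative coefficients, so the equality $\sum_i a_ix_i=t$ cannot be expressed directly; in addition, the threshold of the built ILP must be linear in the source parameter~$k$ rather than in the potentially large value~$t$. Both issues are resolved by a \emph{complementary weight} trick: set $M:=1+\max_i a_i$ and $b_i:=M-a_i\geq 1$, so that $a_i+b_i=M$ for all $i$. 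Then, conditional on $\sum_i x_i=k$, the pair of one-sided inequalities $\sum_i a_ix_i\,\blacktriangleleft\, t$ and $\sum_i b_ix_i\,\blacktriangleleft\, kM-t$ (with the same direction $\blacktriangleleft\in\{\leq,\geq\}$) must both be tight, because the sum of their left-hand sides equals $M\sum_i x_i=kM=t+(kM-t)$.

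For $q$-\packingILP($k$) I would introduce variables $x_1,\ldots,x_n$ together with the four packing constraints (i)~$x_i\leq 1$ for each~$i$, (ii)~$\sum_i a_ix_i\leq t$, (iii)~$\sum_i b_ix_i\leq kM-t$, and (iv)~$\sum_i x_i\leq k$, and set the objective to $\max\sum_i x_i$ with threshold~$k$. Any feasible point of value at least~$k$ must satisfy $\sum_i x_i=k$ (from~(iv)), $x_i\in\{0,1\}$ (from~(i) and integrality), and $\sum_i a_ix_i=t$ (from the complementary-weight argument), so its support is a valid \textsc{Small Subset Sum} witness; the converse direction is immediate by setting $x_i=1$ on $S$ and $0$ elsewhere. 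Each $x_i$ appears in exactly the four constraints above, so column-sparseness is~$4$; for $q>4$ I would append $q-4$ trivial dummy per-variable constraints. Crucially, the threshold of the packing ILP equals the source parameter~$k$, so this is a valid parameterized reduction.

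For $q$-\coverILP($k$) I would mirror the construction with the inequalities flipped. The one subtlety is that cover ILPs cannot directly state $x_i\leq 1$, so I would instead reduce from the \textsc{Unbounded} variant of \textsc{Small Subset Sum}$(k)$---in which each $a_i$ may be taken with arbitrary multiplicity $c_i\in\mathbb{Z}_{\geq 0}$ subject to $\sum_i c_i\leq k$ and $\sum_i c_ia_i=t$---which remains W[1]-hard. The covering construction uses variables $x_i\in\mathbb{Z}_{\geq 0}$ and the two covering constraints $\sum_i a_ix_i\geq t$ and $\sum_i b_ix_i\geq kM-t$, with objective $\min\sum_i x_i$ and threshold~$k$. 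Summing the two constraints yields $M\sum_i x_i\geq kM$, hence $\sum_i x_i\geq k$; combined with the objective cap this gives $\sum_i x_i=k$ and tight inequalities, so $\sum_i a_ix_i=t$. The native column-sparseness is two; to meet the stated bound $q\geq 4$ uniformly with the packing case, I would duplicate each of the two main constraints so that every column has exactly four nonzeros.

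The main obstacle is combining three requirements on the reduction: (i)~simulating the equality $\sum a_ix_i=t$ using only one-sided inequalities with nonnegative coefficients; (ii)~keeping the ILP threshold linear in the source parameter~$k$ rather than in the input value~$t$; and (iii)~keeping column-sparseness at a small constant. The complementary weights $(a_i,b_i)$ with common sum~$M$ resolve all three simultaneously, converting ``sum equals~$t$'' into ``sum equals~$kM$'' whenever the auxiliary constraint $\sum_i x_i=k$ is in force; that single auxiliary constraint (together with the $0/1$ bookkeeping on the packing side) is also precisely what pins the column-sparseness down to the claimed value.
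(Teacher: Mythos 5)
Your packing reduction is essentially the paper's: the paper also reduces from \textsc{Subset Sum}($k$), first normalizes to ``exactly $k$ summands'' by padding with zeros, and uses the same four constraints $x\leq 1$, $\sum_i x_i\leq k$, $\sum_i a_ix_i\leq t$, and a complementary-weight constraint --- the paper takes $t-a_i$ (with common sum $t$, using $a_i\leq t$) where you take $M-a_i$ (with common sum $M$); the tightness argument conditional on $\sum_i x_i=k$ is identical. One small omission on your side: your forward direction needs a witness of size \emph{exactly} $k$ (otherwise the objective value $\sum_i x_i=|S|<k$ misses the threshold), so you must perform the zero-padding normalization explicitly, as the paper does.

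For the covering case you genuinely diverge. The paper keeps the bounded source problem and restores the missing $x\leq 1$ by passing to \textsc{Multicolored Subset Sum}($k$) and adding one covering constraint per color class ($\sum_{i\in\text{class }j}x_i\geq 1$), which together with the budget $\sum_ix_i\leq k$ forces each variable into $\{0,1\}$ and yields column-sparseness $4$; you instead change the source problem to the \emph{unbounded} variant of \textsc{Small Subset Sum}($k$), which makes the upper bounds unnecessary and gives column-sparseness $2$ (already $\leq q$ for $q\geq 4$, so your duplication step is superfluous). Your route is arguably cleaner, but it rests entirely on the assertion that unbounded \textsc{Small Subset Sum}($k$) is W[1]-hard, which you do not justify and which is not an off-the-shelf citation in the same way the bounded version is. It can be proved --- e.g., reduce from \textsc{Multicolored Subset Sum}($k$) and append $k$ digit blocks in base $k+1$ so that the target forces exactly one unit from each color class, which rules out multiplicities --- but then you are invoking the same multicolored machinery the paper uses, just moved into the source problem rather than into the ILP. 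As written, this step is the one genuine gap; the rest of the argument (summing the two covering constraints to get $\sum_ix_i\geq k$, then using the budget to force equality and tightness) is correct.
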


\begin{proof}
For both problems we give a reduction from \textsc{Subset Sum}($k$) which is W[1]-hard due to Downey and Fellows~\cite{DowneyF95_completeness} (see also~\cite{DowneyF1998_parameterizedcomplexity}). In \textsc{Subset Sum}($k$) we are given~$n$ non-negative integer values, say,~$a_1,\ldots,a_n$, along with a target value~$t$ and the question is whether there exists a choice of at most~$k$ numbers whose sum is exactly~$t$; the problem is parameterized by~$k$.
To simplify the reduction we ask for a subset with \emph{exactly}~$k$ numbers (this can be ensured by adding~$k$ times the value~$0$ for padding any subset matching the target sum to cardinality~$k$). We also assume that~$a_i\leq t$ for all~$i$ since larger values can be discarded.

We begin with the reduction to~$q$-\packingILP($k$). We create the following packing ILP with variables~$x_1,\ldots,x_n$, asking for a feasible point~$x$ with~$\mathbf{1}^Tx\geq k$.
\begin{align*}
\max\quad & \mathbf{1}^Tx\\
\mbox{s.t.}\quad & \sum_{i=1}^nx_i\leq k && x\leq 1 \\
& \sum_{i=1}^n a_ix_i\leq t && \sum_{i=1}^n (t-a_i)x_i\leq (k-1)t\\
& x\geq 0
\end{align*}
Clearly, each variable appears in exactly~$4$ constraints, apart from~$x\geq 0$. (If variables are defined to be~$0/1$, or if range constraints~$x_i\in\{0,1\}$ are free, then we get~$q\geq 3$.)
Note that the following equivalence holds; it uses that feasible solutions with value~$\sum_{i=1}^nx_i\geq k$ satisfy also~$\sum_{i=1}^nx_i=k$.
\begin{align*}
\sum_{i=1}^n (t-a_i)x_i\leq (k-1)t &\quad\Leftrightarrow\quad t\sum_{i=1}^n x_i - \sum_{i=1}^n a_ix_i \leq (k-1)t\\
& \quad\Leftrightarrow \quad kt-\sum_{i=1}^n a_ix_i \leq (k-1)t\\
& \quad\Leftrightarrow \quad \sum_{i=1}^n a_ix_i \geq t
\end{align*}
Thus feasible solutions with value at least~$k$ satisfy also~$\sum_{i=1}^nx_i=k$ and $\sum_{i=1}^n a_ix_i=t$. Since each~$x_i$ is either zero or one this translates directly to a choice of~$k$ numbers~$a_i$ whose sum is exactly~$t$. The reverse holds easily, by setting the~$x_i$ according to a choice of~$k$ numbers with sum exactly~$t$.

For~$q$-\coverILP($k$) the same style of ILP works, except for a caveat: We do not have the ability to enforce~$x\leq 1$. While this could be replaced by
\begin{align*}
\sum_{i\neq j}x_i\geq k-1
\end{align*}
for all~$j\in\{1,\ldots,n\}$, this would violate the desired bound on the column-sparseness. If the problem setting permits to define a range for each variable~$x_i$ then of course we can add~$x_i\in\{0,1\}$ instead; this would give column-sparseness $q=3$.
Otherwise, using a derandomized form of color-coding (e.g., using splitters~\cite{NaorSS95}), we could first show W[1]-hardness for Multicolored Subset Sum($k$) where the numbers are colored~$1,\ldots,k$ and the task is to match the target sum by picking one number of each color (the back reduction to Subset Sum($k$) is easy by encoding the colors into the numbers). Then we could augment our reduction by demanding that the sum over all indicator variables for numbers of the same color be at least one (giving column-sparseness~$q=4$). This naturally enforces that each variable takes value at most one, since we have exactly~$k$ colors. Nevertheless, since this is not the main focus of this work, we will not go into the details of this. For completeness we state the covering ILP with~$0/1$-variables.
\begin{align*}
\min\quad & \sum_{i=1}^n x_i\\
\mbox{s.t.}\quad & \sum_{i=1}^nx_i\geq k && x_i\in\{0,1\}\\
& \sum_{i=1}^n a_ix_i\geq t && \sum_{i=1}^n (t-a_i)x_i\geq (k-1)t 
\end{align*}
Correctness can be argued analogously to the case of~$q$-\packingILP($k$).
\myqed
\end{proof}

Finally, for \coverILP($k+r$), i.e., parameterized by target solution cost plus row-sparseness, it is straightforward to show fixed-parameter tractability but the problem does not admit a polynomial kernelization by encompassing \textsc{Hitting Set}($k+d$), parameterized by solution size and maximum set size~$d$; for the latter problem Dom et al.~\cite[Theorem~6]{DomLS09} ruled out polynomial kernels, assuming \noncontainment.

\begin{proposition}\label{proposition:covering:hittingsetncase}
The \coverILP($k+r$) problem is fixed-parameter tractable but admits no polynomial kernel or compression unless \containment.
\end{proposition}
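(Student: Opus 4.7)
The plan is to address the two parts separately; both follow quickly from results already established in the paper plus one prior lower bound.

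For fixed-parameter tractability I would simply invoke the row-sparse kernelization that is already available. Lemma~\ref{lemma:rcoverilpk:basicreduction} and Lemma~\ref{lemma:rcoverILPk:preprocessing} together reduce any instance in polynomial time to an equivalent one with at most $N=\Oh(k^{r^2+r})$ variables and constraints, in which every variable of any candidate solution takes values in $\{0,\ldots,k\}$. A brute-force search over the at most $(k+1)^N$ assignments then solves the reduced instance in time bounded by a computable function of $k+r$, so \coverILP($k+r$) is FPT. (Alternatively, one could invoke Lenstra's algorithm on the reduced instance, whose dimension only depends on $k+r$.)

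For the kernel (and compression) lower bound I would give a polynomial parameter transformation from \hittingset($k+d$), where $d$ is the maximum size of a set in the input family; Dom, Lokshtanov, and Saurabh~\cite{DomLS09} ruled out polynomial kernels and compressions for this parameterization unless \containment. Given an instance $(U,\F,k)$ with $|F|\leq d$ for every $F\in\F$, I would construct a covering ILP with one non-negative integer variable $x_u$ per element $u\in U$, cost function $c^Tx=\sum_{u\in U}x_u$, and, for each $F\in\F$, a constraint $\sum_{u\in F}x_u\geq 1$. A hitting set of size at most $k$ maps to its indicator vector, which is feasible with $c^Tx\leq k$; conversely, for any feasible $x$ with $c^Tx\leq k$ the support $\{u : x_u\geq 1\}$ has size at most $k$ and hits every $F\in\F$, since each constraint forces at least one such variable inside $F$ to be positive. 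The constructed ILP has row-sparseness at most $d$, so the new parameter $k+r$ is bounded by the original parameter $k+d$, giving a linear-size polynomial parameter transformation and transferring the lower bound to \coverILP($k+r$).

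There is no real obstacle here: the FPT half is an immediate consequence of the kernelization of Theorem~\ref{theorem:main:rcoverilpk:pk} applied with $r$ treated as a parameter, and the hardness half is a one-line encoding of \hittingset as a covering ILP combined with the Dom--Lokshtanov--Saurabh result. The only point that needs care is the routine verification that the encoding preserves feasibility and the cost bound in both directions, which is immediate because all coefficients and right-hand sides are $0$ or $1$ and the cost vector is the all-ones vector.
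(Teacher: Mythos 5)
Your proof is correct, and the hardness half coincides with the paper's own argument: the same indicator-variable encoding of \hittingset($k+d$) with row-sparseness $r=d$, combined with the lower bound of Dom et al.~\cite{DomLS09}. The FPT half, however, takes a genuinely different route. The paper gives a direct bounded search tree: after ensuring $c_i\geq 1$ it repeatedly picks an unsatisfied constraint and branches on which of its at most $r$ variables to increment, giving an $\Oh(r^k(n+m)^{\Oh(1)})$ algorithm. You instead run the reduction machinery of Lemmas~\ref{lemma:rcoverilpk:basicreduction} and~\ref{lemma:rcoverILPk:preprocessing} and then exhaust over the shrunken instance. This works, but two caveats are worth making explicit. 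First, once $r$ is a parameter rather than a constant, those lemmas no longer run in strictly polynomial time: the scope-deduplication and the marking step each enumerate up to $(k+1)^r$ core assignments, so the preprocessing takes time $f(k,r)\cdot(n+m)^{\Oh(1)}$ rather than $(n+m)^{\Oh(1)}$. That is still FPT time, so your conclusion stands, but the phrase ``in polynomial time'' should be weakened accordingly. Second, the resulting running time $(k+1)^{\Oh(k^{r^2+r})}$ is vastly worse than the paper's $r^k$; the branching algorithm is both more elementary and quantitatively much better, and it does not need the sunflower-based reduction at all. (Your parenthetical alternative of running Lenstra's algorithm on the reduced instance is also fine, and is in fact the route the paper takes for the $k+q+r$ parameterizations.)
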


\begin{proof}
Let us first sketch an~$\Oh(r^k(n+m)^{\Oh(1)})$ time fpt-algorithm following similar algorithms for, e.g., $r$-\textsc{Hitting Set}($k$)~\cite{DowneyF1998_parameterizedcomplexity}. We may assume that the cost function~$c^Tx$ has~$c_i\geq 1$ for all~$i\in\{1,\ldots,n\}$ (otherwise such variables~$x_i$ can be removed along with their constraints, as setting them to arbitrarily large values satisfies all their constraints but costs zero). The algorithm starts from~$x=0$ and tries to find a minimal feasible solution~$x$ with~$c^Tx\leq k$ by branching. In each recursive call with some partial solution~$x$, it picks an arbitrary infeasible constraint and branches on the choice of one of its at most~$r$ variables, and increases the value of that variable by one (in~$x$). This increases the cost~$c^Tx$ by at least one. In each branch, this is repeated until either a feasible solution is found, or until we have reached the maximum budget of~$k$. It is easy to prove correctness by using the invariant that each level of the branching tree contains at least one (possibly infeasible) solution~$x$ with~$x\leq x^*$ (position-wise), where~$x^*$ is an arbitrary feasible solution with cost~$c^Tx$ at most~$k$ (if one exists). The runtime is clearly dominated by the~$\Oh(r^k)$ leaves of the search tree (times polynomial factors). 

The reduction from \textsc{Hitting Set}($k+d$) to \coverILP($k+r$) is straightforward: Make an indicator variable for each element of the base set and ask for a solution with sum of indicators at most~$k$. For each set, demand that the sum of indicators for its elements is at least~$1$; this gives row-sparseness~$r=d$.
\myqed
\end{proof}


\section{A polynomial kernelization for $r$-ILP Feasibility($n+C$)}\label{appendix:section:rilpfnC:pk}

\begin{proposition}\label{proposition:rilpfnC:pk}
The $r$-\ILPF($n+C$) problem admits a polynomial kernelization.
\end{proposition}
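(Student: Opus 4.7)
The plan is to exploit the fact that, in an $r$-row-sparse ILP with integer coefficients bounded in absolute value by~$C$, only polynomially many (in~$n+C$) syntactically distinct constraints can exist, so that the kernelization reduces to nothing more than removing duplicate constraints.

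Concretely, I would first put each constraint into a canonical form: split any equality into a pair of inequalities, rewrite everything as $\sum_i a_i x_i \leq b$, and order the variables appearing with nonzero coefficient by index. Each such canonical constraint is then determined by three pieces of data: its scope $S \subseteq \{1, \ldots, n\}$ with $|S| \leq r$, an integer coefficient in $\{-C, \ldots, C\}$ for each variable in $S$, and a right-hand side in $\{-C, \ldots, C\}$. Counting possibilities gives at most $\sum_{i=0}^r \binom{n}{i} \cdot (2C+1)^r \cdot (2C+1) = \Oh(n^r \cdot C^{r+1})$ distinct constraints, which is polynomial in $n + C$ for constant~$r$.

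The kernelization is then immediate: compute the canonical form of every input constraint in polynomial time and keep a single representative per distinct form; duplicate constraints are mutually redundant, so feasibility is preserved. The output ILP has $n$ variables (unchanged) and $\Oh(n^r C^{r+1})$ constraints, each encoded in $\Oh(r \log(n+C))$ bits, for total bit size polynomial in $n+C$. I anticipate no genuine obstacle; the only minor care is that $C$ should be read as bounding all numerical data (including RHS values), matching the natural encoding of ILP instances used elsewhere in the paper, and that the canonicalization step must be implemented so that two structurally identical constraints are actually detected as duplicates (straightforward by sorting by scope index and lexicographically comparing the resulting tuples).
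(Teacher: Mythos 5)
Your proposal is correct and follows essentially the same approach as the paper: bound the number of syntactically distinct $r$-sparse constraints by $\Oh(n^r C^{r+1})$ (the paper keeps the relation symbol as a factor of $3$ where you canonicalize to $\leq$, which makes no difference) and then discard duplicates. No further comment is needed.
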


\begin{proof}
The key fact is that there can be only few different constraints in an instance of~$r$-ILPF($n+C$): Any linear constraint on~$d\leq r$ variables, say,
\begin{align*}
 \alpha_1 x_{i_1} + \alpha_2 x_{i_2} + \ldots + \alpha_d x_{i_d} \blacktriangleleft \beta
\end{align*}
is characterized by the following aspects.
\begin{enumerate}
 \item The choice of relation symbol~$\blacktriangleleft\in\{\leq,\geq,=\}$.
 \item The choice of~$d$ variables:~$\binom{n}{d}=\Oh(n^d)$.
 \item The values of~$\alpha_1,\ldots,\alpha_d$ and~$\beta$. By problem restriction, the absolute values of the coefficients are at most~$C$, hence for each one we have~$2C+1$ choices out of~$\{-C,\ldots,0,\ldots,C\}$. Thus, there are~$(2C+1)^{d+1}$ choices.
\end{enumerate}
Over all~$d\leq r$ this gives a total number of~$m=\Oh(n^rC^{r+1})$ different constraints. Thus, discarding all duplicate constraints (same variables and coefficients) and using the canonical encoding in~$\Oh(nm\log C)=\Oh(n^{r+1}C^{r+1}\log C)$ bits constitutes a polynomial kernelization with respect to~$n+C$.
\myqed
\end{proof}


\section{A polynomial kernelization for $q$-ILP Feasibility($m+C$)}\label{appendix:section:qilpfmC:pk}

\begin{proposition}\label{proposition:qilpfmC:pk}
The~$q$-\ILPF($m+C)$ problem admits a polynomial kernelization.
\end{proposition}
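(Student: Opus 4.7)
The plan is to mirror Proposition~\ref{proposition:rilpfnC:pk} by working with \emph{columns} of the constraint matrix instead of rows. For each variable~$x_j$ of an instance of~$q$-\ILPF($m+C$), define its \emph{column pattern} to be the vector~$(a_{1,j},\ldots,a_{m,j})\in\{-C,\ldots,C\}^m$ of its coefficients across the~$m$ constraints. By column-sparseness, at most~$q$ of these entries are nonzero, and each nonzero entry takes one of~$2C$ possible values. Counting the choice of at most~$q$ nonzero positions (at most~$\binom{m}{q}\leq m^q$) together with the coefficient value at each (at most~$(2C)^q$ in total) gives~$\Oh(m^qC^q)$ distinct column patterns, which is polynomial in~$m+C$ for fixed~$q$.

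The single reduction rule is: if two variables~$x_j$ and~$x_{j'}$ have identical column patterns, delete~$x_{j'}$ from the ILP. Soundness is straightforward and relies on the fact that variables range over all of~$\Z$. Any feasible point~$(x_1^*,\ldots,x_n^*)$ of the original ILP yields a feasible point of the reduced ILP by replacing~$x_j^*$ with~$x_j^*+x_{j'}^*$: the contribution of this pair to every constraint is~$a_{i,j}(x_j^*+x_{j'}^*)=a_{i,j}x_j^*+a_{i,j'}x_{j'}^*$ since the two columns are equal. Conversely, any feasible assignment of the reduced ILP extends back by setting~$x_{j'}=0$. Applying the rule exhaustively leaves at most one variable per column pattern, so the reduced ILP has~$n'\leq\Oh(m^qC^q)$ variables. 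It still has~$m$ constraints, each coefficient is in~$\{-C,\ldots,C\}$, and the column-sparseness remains at most~$q$, so the output is a legitimate~$q$-\ILPF instance.

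Finally bound the encoded size. Since each surviving variable contributes at most~$q$ nonzero coefficients, the sparse representation has~$\Oh(qn')=\Oh(m^qC^q)$ nonzero matrix entries, each encoded in~$\Oh(\log(mC))$ bits, plus the~$m$ right-hand sides with~$\Oh(m\log C)$ bits in total, yielding an encoding of size~$\Oh(m^qC^q\log(mC))$ which is polynomial in~$m+C$. The only point requiring care is the justification of the merging rule, and this is immediate from the substitution~$x_j\mapsto x_j+x_{j'}$, $x_{j'}\mapsto 0$ leaving every constraint value unchanged; all remaining steps are routine counting.
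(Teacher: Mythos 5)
Your proposal is correct and follows essentially the same route as the paper's proof: count the $\Oh(m^qC^q)$ possible column patterns and merge any two variables with identical columns via the substitution $x_j\mapsto x_j+x_{j'}$, $x_{j'}\mapsto 0$. The only cosmetic difference is that the merge does not actually require the variables to range over all of $\Z$; it also works under a global constraint $x\geq 0$, since the sum of two nonnegative values is again nonnegative (a point the paper addresses in a short technical note).
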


\begin{proof}
The main observation is that there is only a bounded number of different coefficient patterns for the variables since each variable appears only in a constant number of constraints:
\begin{enumerate}
 \item Each variable appears in some set of up to~$q$ constraints, giving less than~$(m+1)^q=\Oh(m^q)$ choices.
 \item In each such constraint, it has a coefficient of absolute value at most~$C$, giving~$(2C+1)^q$ choices.
\end{enumerate}
Now, if two variables, say~$x_i$ and~$x_j$ have the same pattern, then we delete one of them. Let us check that deletion of any single variable~$x_j$ for which we have another variable~$x_i$ with the same pattern does no harm: It is clear that a solution for the created ILP can be extended to the original by setting~$x_j=0$. Conversely, let~$x_i=a_i$ and~$x_j=a_j$ be part of a feasible point for the original ILP. For the created ILP, which does not contain~$x_j$, we create a feasible point by letting~$x_i=a_i+a_j$, and letting all other coordinates be the same. It is easy to see that due to having the same pattern, all constraints are still satisfied (this is true for any way of splitting~$a_i+a_j$ between~$x_i$ and~$x_j$).

In total we keep only one variable for each pattern, giving~$\Oh(m^qC^q)$ variables. This allows an encoding in space~$\Oh(nm\log C)=\Oh(m^{q+1}C^q\log C)$.

A technical note: We have assumed that there is no global constraint~$x\geq 0$ that we would not count as separate constraints~$x_i\geq 0$ in the parameter~$m$ (by nature of writing ILPs as, e.g.,~$Ax\leq b$,~$x\geq 0$ and taking~$m$ to be the number of rows of~$A$). If we have~$x\geq 0$ then, in the above argument, letting~$x_i=a_i+a_j$ still works (since~$a_i,a_j\geq 0$) and, similarly, setting~$x_j=0$ is feasible.
\myqed
\end{proof}


\begin{figure}[p]
\centering
\begin{tikzpicture}[scale=0.9,thick, every node/.style={rectangle split, rectangle split parts=2, draw}]
\node (nmC) at (0,8) {$n+m+C$ \nodepart{second} (PK)};
\node (nmc) at (-4,6) {$n+m+\log C$ \nodepart{second} \textbf{PK (trivial)}};
\node (nm) at (-6,4) {$n+m$ \nodepart{second} \textbf{no PK (Thm.~\ref{theorem:main:ilpfnm:nopk})}};
\node (nC) at (0,6) {$n+C$ \nodepart{second} \textbf{no PK (HS($\boldsymbol{n}$))}};
\node (nc) at (-2,4) {$n+\log C$ \nodepart{second} (no PK)};
\node (mC) at (4,6) {$m+C$ \nodepart{second} \textbf{no PK (HS($\boldsymbol{m}$))}};
\node (mc) at (2,4) {$m + \log C$ \nodepart{second} (no PK)};
\node (n) at (-4,2) {$n$ \nodepart{second} (no PK)};
\node (C) at (6,4) {$C$ \nodepart{second} \textbf{NP-hard}};
\node (m) at (0,2) {$m$ \nodepart{second} (no PK)};
\node (c) at (4,2) {$\log C$ \nodepart{second} (NP-hard)};

\draw (nmC) -- (nmc) -- (nm) -- (n);
\draw (nmC) -- (nC) -- (nc) -- (n);
\draw (nmC) -- (mC) -- (mc) -- (m);
\draw (nmc) -- (nc) -- (c);
\draw (nmc) -- (mc) -- (c);
\draw (nm) -- (m);
\draw (nC) -- (C) -- (c);
\draw (mC) -- (C);
\end{tikzpicture}
\caption{\label{figure:general:ilpf:overview} Existence of polynomial kernels for (unrestricted) \ILPF regarding parameterization by a subset of~$n$,~$m$,~$\log C$, and~$C$ (depicted in top half of nodes). ``PK'' indicates a polynomial kernel, ``no PK'' means that no polynomial kernel is possible unless \containment, and ``NP-hard'' stands for NP-hardness with constant parameter value. Positive results transfer upwards to easier (larger) parameterizations; negative results transfer downwards to harder (smaller) parameterizations. NP-hardness of~ILPF($C$) for~$C=1$ comes from \textsc{Satisfiability}.}
\end{figure}

\begin{figure}[p]
\centering
\begin{tikzpicture}[scale=0.9,thick, every node/.style={rectangle split, rectangle split parts=2, draw}]
\node (nmC) at (0,8) {$n+m+C$ \nodepart{second} (PK)};
\node (nmc) at (-4,6) {$n+m+\log C$ \nodepart{second} \textbf{PK (trivial)}};
\node (nm) at (-6,4) {$n+m$ \nodepart{second} \textbf{no PK (Cor.~\ref{corollary:ilpfnm:sparse:nopk})}};
\node (nC) at (0,6) {$n+C$ \nodepart{second} \textbf{PK (Prop.~\ref{proposition:rilpfnC:pk})}};
\node (nc) at (-2,4) {$n+\log C$ \nodepart{second} \textbf{no PK~\cite{Kratsch13_sparseilp}}};
\node (mC) at (4,6) {$m+C$ \nodepart{second} (PK)};
\node (mc) at (2,4) {$m + \log C$ \nodepart{second} \textbf{PK ($\boldsymbol{n\leq rm}$)}};
\node (n) at (-4,2) {$n$ \nodepart{second} (no PK)};
\node (C) at (6,4) {$C$ \nodepart{second} \textbf{NP-hard}};
\node (m) at (0,2) {$m$ \nodepart{second} (no PK)};
\node (c) at (4,2) {$\log C$ \nodepart{second} (NP-hard)};

\draw (nmC) -- (nmc) -- (nm) -- (n);
\draw (nmC) -- (nC) -- (nc) -- (n);
\draw (nmC) -- (mC) -- (mc) -- (m);
\draw (nmc) -- (nc) -- (c);
\draw (nmc) -- (mc) -- (c);
\draw (nm) -- (m);
\draw (nC) -- (C) -- (c);
\draw (mC) -- (C);
\end{tikzpicture}
\caption{\label{figure:rowsparse:ilpf:overview} Existence of polynomial kernels for ($r$-row-sparse)~$r$-\ILPF regarding parameterization by a subset of~$n$,~$m$,~$\log C$, and~$C$ (depicted in top half of nodes). ``PK'' indicates a polynomial kernel, ``no PK'' means that no polynomial kernel is possible unless \containment, and ``NP-hard'' stands for NP-hardness with constant parameter value. Positive results transfer upwards to easier (larger) parameterizations; negative results transfer downwards to harder (smaller) parameterizations. NP-hardness of~$r$-ILPF($C$) for~$C=1$ comes from $3$-\textsc{Satisfiability}.}
\end{figure}

\begin{figure}[p]
\centering
\begin{tikzpicture}[scale=0.9,thick, every node/.style={rectangle split, rectangle split parts=2, draw}]
\node (nmC) at (0,8) {$n+m+C$ \nodepart{second} (PK)};
\node (nmc) at (-4,6) {$n+m+\log C$ \nodepart{second} \textbf{PK (trivial)}};
\node (nm) at (-6,4) {$n+m$ \nodepart{second} \textbf{no PK (Cor.~\ref{corollary:ilpfnm:sparse:nopk})}};
\node (nC) at (0,6) {$n+C$ \nodepart{second} (PK)};
\node (nc) at (-2,4) {$n+\log C$ \nodepart{second} \textbf{PK ($\boldsymbol{m\leq qn}$)}};
\node (mC) at (4,6) {$m+C$ \nodepart{second} \textbf{PK (Prop.~\ref{proposition:qilpfmC:pk})}};
\node (mc) at (2,4) {$m + \log C$ \nodepart{second} \textbf{no PK (\cite{DomLS09}*)}};
\node (n) at (-4,2) {$n$ \nodepart{second} (no PK)};
\node (C) at (6,4) {$C$ \nodepart{second} \textbf{NP-hard}};
\node (m) at (0,2) {$m$ \nodepart{second} (no PK)};
\node (c) at (4,2) {$\log C$ \nodepart{second} (NP-hard)};

\draw (nmC) -- (nmc) -- (nm) -- (n);
\draw (nmC) -- (nC) -- (nc) -- (n);
\draw (nmC) -- (mC) -- (mc) -- (m);
\draw (nmc) -- (nc) -- (c);
\draw (nmc) -- (mc) -- (c);
\draw (nm) -- (m);
\draw (nC) -- (C) -- (c);
\draw (mC) -- (C);
\end{tikzpicture}
\caption{\label{figure:columnsparse:ilpf:overview} Existence of polynomial kernels for ($q$-column-sparse)~$q$-\ILPF regarding parameterization by a subset of~$n$,~$m$,~$\log C$, and~$C$ (depicted in top half of nodes). ``PK'' indicates a polynomial kernel, ``no PK'' means that no polynomial kernel is possible unless \containment, and ``NP-hard'' stands for NP-hardness with constant parameter value. Positive results transfer upwards to easier (larger) parameterizations; negative results transfer downwards to harder (smaller) parameterizations. The lower bound for~$q$-ILPF($m+\log C$) follows from a lower bound for \textsc{Small Subset Sum}($k+d$) due to Dom et al.~\cite{DomLS09}: Expressing \textsc{Subset Sum} with weights of value at most~$2^d$ requires only~$m=\Oh(1)$ constraints and the encoding size~$\log C$ per weight of value at most~$C=2^d$ is~$\Oh(d)$. NP-hardness of~$q$-ILPF($C$) for~$C=1$ comes from \textsc{Satisfiability} restricted to instances with variable degree at most three (cf.~\cite{GareyJ1979}).}
\end{figure}

\end{document}